\documentclass[11pt,leqeq]{article}
\usepackage{amssymb,amsthm}
\usepackage{amsmath}
\usepackage{amscd}
\usepackage{epsf,graphics,graphicx,psfrag}
\usepackage{graphics,graphicx,psfrag}
\usepackage{amscd}
\usepackage{epsfig}
\usepackage[all]{xy}
\setlength{\topmargin}{-0.2in} \setlength{\oddsidemargin}{0.2in}
\setlength{\textwidth}{6.25in}
\setlength{\textheight}{8.5in}
\setlength{\unitlength}{0.6pt}
\newtheorem{thm}{Theorem}
\newtheorem{lema}[thm]{Lemma}
\newtheorem{cor}[thm]{Corollary}
\newtheorem{prop}[thm]{Proposition} 

\newtheorem{defi}[thm]{Definition}

\date{}

\begin{document}
\setlength{\baselineskip}{16pt}
\title{Invariants from classical field theory}
\author{Rafael D\'\i az and Lorenzo Leal}
\maketitle

\begin{abstract}
We introduce a method that generates invariant functions from
perturbative classical field theories depending on external
parameters. Applying our methods to several field theories such as
abelian $BF$, Chern-Simons and $2$-dimensional Yang-Mills theory,
we obtain, respectively, the linking number for embedded
submanifolds in compact varieties, the Gauss' and the second
Milnor's invariant for links in $S^3$, and invariants under
area-preserving diffeomorphisms for configurations of immersed
planar curves.
\end{abstract}

\section{Introduction}

Suppose we have a physical system determined by an action
functional $S$ depending on fields as usual, but also on external
parameters $p$ belonging to some space $P$.  Fixing the external
parameters we can study our field theory from a classical or
quantum point of view.  A fundamental problem in physics is to
understand how the system changes as $p$ varies. In this work we
focus our attention on a particular issue arising within this
setting. Suppose that fields and external parameters are acted
upon by a Lie group $G$, and that $S$ is invariant under the
simultaneous action of $G$ on fields and parameters. How is the
group of symmetries $G$ reflected on the states of the system as
$p$ varies? The answer to this problem is very different depending
on whether we look at it from a quantum or a classical point of
view. The quantum situation requires techniques related to
anomalies
and will not be discussed in this work.\\

We look at the classical situation from two points of view: we
consider exact as well as perturbative solutions of the equations
of motion. The exact results admit a simple and clean description
given in Section \ref{ER}. Simplifying slightly, the conclusion is
that associated with such a classical system there is a function
$S_{os}:P \longrightarrow \mathbb{R}$ whose value at $p$ is
obtained by evaluating $S$ on-shell, i.e.
$S_{os}(p)=S(\varphi(p),p)$ where $\varphi(p)$ represents a
solution of the equations of motion of the action $S(
\ \ , p)$. If the correspondence $p \longrightarrow
\varphi(p)$ can be constructed in a $G$-equivariant fashion, then
the function $S_{os}:P
\longrightarrow \mathbb{R}$ is invariant under the action of
$G$. We apply this technique  to abelian $BF$ theory on compact
oriented manifolds, with the external parameters being pairs of
non-intersecting embedded submanifolds of the appropriated
dimensions. The on-shell action is the linking number of the
embedded submanifolds, an invariant under diffeomorphism of the
ambient manifold connected to the identity. We write explicitly
using local coordinates the on-shell
action for this example.\\

It is seldom possible to find explicitly the solutions of the
equations of motion for most interesting Lagrangian systems. In
this case an alternative route is to look for perturbative
solutions of the equation of motion. Section
\ref{poa} contains the main result of this work: the proof that a
hierarchy of invariant functions on parameter space can be
obtained from the computation of the perturbative on-shell action
$S_{os}$ of a functional action $S$ invariant under the
simultaneous action of a Lie group on fields and external
parameters. In order to show this result we develop in Section
\ref{PS} a fairly explicit model for the study of perturbative
solutions of systems of equations (algebraic, differential,
integral, etc.) We show that, in the non-degenerate case, the
perturbative solution of the system is unique. In the degenerated
case, even though uniqueness is lost, our methods guarantee that
the space of perturbative solutions is non-empty, and provide an
explicit solution under weak assumptions. We apply this technique
to solve a couple of general equations arising in the context of
Hodge algebras using techniques closely related to homological
perturbation theory \cite{gu}. In Section
\ref{csli} we consider an example of interest in low dimensional
topology, namely, we applied our methodology to Chern-Simons-Wong
action and show that it yields link invariants, in particular, we
obtain the Gauss' and the second Milnor's invariants. In Section
\ref{ym} we discuss how invariants under area preserving
diffeomorsphisms of $\mathbb{R}^2$ can be obtained given a generic
finite family of immersed curves in the plane, applying our
methods to Yang-Mills-Wong action. The Wong term that we add both
in the Chern-Simons and Yang-Mills theories may be thought,
physically, as the action for conservation of cromo-electric
charge, and mathematically, as the action functional for parallel
transport in a fiber bundle. In Section
\ref{dp} we present a brief discussion of open
problems and future lines of research.\\

\section{Exact results}\label{ER}

Fix spaces $F$ and $P$, thought as the space of fields and
parameters, respectively. Assume that a Lie group $K$, thought as
the gauge group, acts on $F$. Fix another Lie group $G$ which acts
on $F$ and $P$, together with a map $k: G
\longrightarrow K$. Let $G$ act on $F \times P$ via the diagonal action.
Suppose we have  map $S:F \times P \longrightarrow
\mathbb{R},$ thought as the action of a classical field theory, satisfying
$S(g\varphi,gp)=S(\varphi,p)$ and $S(k\varphi,p)=S(\varphi,p)$ for
$(\varphi,p)
\in F
\times P$, $g \in G$ and $k \in K$. In addition, assume we have
map $\alpha: P \longrightarrow F$ such that $\alpha(gp)=k(g)
g\alpha(p) $ for $g \in G$ and $p \in P$. Lemma \ref{exact1} below
explains how one can get a $G$-invariant function on $P$ from this
data.

\begin{lema}\label{exact1}{\em
The map $S_{\alpha}:P \longrightarrow \mathbb{R}$ given by
$S_{\alpha}(p)=S(\alpha(p),p)$ is $G$-invariant.}
\end{lema}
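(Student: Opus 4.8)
The plan is to establish the $G$-invariance of $S_\alpha$ by a direct computation that chains together the three structural hypotheses in the correct order. First I would fix $g \in G$ and $p \in P$ and unfold the definition: $S_\alpha(gp) = S(\alpha(gp), gp)$. Next I would insert the equivariance of $\alpha$, namely $\alpha(gp) = k(g)\, g\alpha(p)$, turning this into $S\big(k(g)\, g\alpha(p),\, gp\big)$. Now the field entry is the gauge element $k(g) \in K$ acting on $g\alpha(p) \in F$, so I would apply the gauge invariance $S(k\varphi, p) = S(\varphi, p)$ with $k = k(g)$ and $\varphi = g\alpha(p)$, the parameter being held fixed at $gp$; this yields $S\big(g\alpha(p),\, gp\big)$. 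Finally I would apply the diagonal invariance $S(g\varphi, gp) = S(\varphi, p)$ with $\varphi = \alpha(p)$ to obtain $S(\alpha(p), p) = S_\alpha(p)$, which is the desired identity $S_\alpha(gp) = S_\alpha(p)$.

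There is essentially no obstacle here; the statement is a formal consequence of the definitions. The one point deserving attention is the ordering of the two invariances: one must first use gauge invariance to strip off the factor $k(g)$ while the parameter is still $gp$, and only afterwards use the diagonal $G$-invariance, which requires the field and parameter arguments to be transformed by the \emph{same} element $g$. Attempting the diagonal step first would leave a mismatched gauge factor attached to the field. It is also worth noting that the argument never uses whether $k: G \to K$ is a group homomorphism — only the pointwise identity $\alpha(gp) = k(g)\, g\alpha(p)$ is invoked — so the lemma holds under precisely the hypotheses as stated, and the map $\alpha$ plays the role that an equivariantly chosen solution $\varphi$ of the equations of motion will play in the applications.
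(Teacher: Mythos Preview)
Your proof is correct and follows essentially the same chain of equalities as the paper: unfold the definition, substitute $\alpha(gp)=k(g)g\alpha(p)$, apply gauge invariance to strip $k(g)$, then apply diagonal $G$-invariance. Your remark on the necessary ordering of the two invariances is a nice clarification that the paper's one-line computation leaves implicit.
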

Indeed for  $g \in G$ and $p \in P$ we have that:
$$S_{\alpha}(gp)=S(\alpha(gp),gp)=S(k(g)g\alpha(p),gp)=S(g\alpha(p),p)=S(\alpha(p),p)=S_{\alpha}(p).$$

A fundamental question thus arises in this context: how can one
obtain such a map $\alpha$? We are going to show via examples that
it is often possible to find a map $\alpha$ with the required
properties by solving the equation of motion, i.e. finding for
each $p \in P$ a solution $\varphi(p)$ of the equation
$$\frac{\partial S}{\partial\varphi}(\varphi(p),p)=0.$$ Thus we are going to show that the so-called on-shell action
$S_{os}(p)=S(\varphi(p),p)$ is a $G$-invariant function on parameter space.\\

Let us first consider abelian $BF$ gauge theory generalizing a
construction of \cite{l2}. Let $M$ be a compact oriented manifold
of dimension $n$ and fix $1 \leq p \leq n$. The space of fields is
$$\Omega^{p}(M)
\oplus \Omega^{n-p-1}(M)$$ where $\Omega^{i}(M)$ denotes the space
of differential $i$-forms on $M$. Let $BE(M,i)$ be the space of
bounding embedded $i$-dimensional submanifolds of $M$, i.e.
$$BE(M,i)= \{\gamma \  | \  \gamma: \Sigma \rightarrow M  \mbox{
bounding embedding, $\Sigma$ a compact oriented $i$-manifold }\}/
\sim.$$ An embedding $\gamma: \Sigma \longrightarrow M$ is bounding if there exists embedding
$\delta:\Delta \longrightarrow M$, where $\Delta$ is an oriented
manifold with boundaries such that $\partial(\Delta)=\Sigma$ and
$\delta|_{\Sigma}=\gamma$. Embeddings $\gamma_1: \Sigma_1
\longrightarrow M$ and $\gamma_2:\Sigma_2
\longrightarrow M$ are $\sim$ equivalent if there exists an orientation preserving diffeomorphism
$\phi: \Sigma_1 \longrightarrow \Sigma_2$ such that $\gamma_1=\phi
\circ \gamma_2$. The space of parameters is $BE(M,p)
\times BE(M,n-p-1)$ and the action functional
$$S:(\Omega^{p}(M) \oplus \Omega^{n-p-1}(M))
\times BE(M,p) \times BE(M,n-p-1) \longrightarrow \mathbb{R}$$
is a $BF$ theory couple to external parameters given by
\begin{equation*} \label{BF}
-S(A_{1},A_{2},\gamma _{1},\gamma _{2}) = \int_{M} A_{1} \wedge
dA_{2} + \int_{\Sigma _{1}} \gamma_{1}^{*}A_{1} + \int_{\Sigma
_{2}}\gamma_{2}^{*} A_{2},
\end{equation*}
for  $(A_{1},A_{2},\gamma _{1},\gamma_{2}) \in (\Omega^{p}(M)
\oplus \Omega^{n-p-1}(M)) \times BE(M,p) \times BE(M,n-p-1).$
The action $S$ is invariant under gauge transformations $A_{1}
\rightarrow A_{1}+df_{1},$ $A_{2} \rightarrow A_{2}+df_{2},$ where
$f_{1} \in
\Omega^{p-1}(M)$ and $f_{2} \in
\Omega^{n-p-2}(M)$. Let $A(M)$ be the infinite dimensional Lie group of
automorphisms of $M$ connected to the identity; $A(M)$ acts on
forms and embedded submanifolds by pull back and push forward,
respectively. $S$ is manifestly
$A(M)$-invariant since it is metric-independent.\\

Recall \cite{gh} that the Poincar\'e dual form $P(\gamma)
\in \Omega^{n-p}(M)$ of an embedding
$\gamma:\Sigma \longrightarrow M $  is uniquely determined, modulo
de addition of an exact form, by demanding that $$\int_{\Sigma}
\gamma^{*}A =\int _{M}P(\gamma)\wedge A$$ for $A
\in \Omega^{p}(M).$ Poincar\'e dual forms have the following
properties: if $\gamma:
\Sigma \longrightarrow M$ is the boundary of  $\delta:\Delta
\longrightarrow M$, then  $d(P(\Delta))=P(\Sigma)$; if
$\phi:M \longrightarrow M$ is a diffeomorphism, then $P(\phi^{-1}
\circ \gamma)=\phi^{*}P(\gamma)$.  Poincar\'e dual forms, among many other
things, are useful to compute the linking number
$lk(\gamma_{1},\gamma_{1})$ of embedded bounding submanifolds
$\gamma_{1}:
\Sigma_{1} \longrightarrow M$ and $\gamma_{2}: \Sigma_{2}
\longrightarrow M$ of dimension $p$ and $n-p-1$, respectively, as
follows: $$lk(\gamma_{1},\gamma_{2})= \int _{M}P(\Sigma_{1})\wedge
P(\Delta_{2})$$ where  $\delta_{2}:\Delta_{2} \longrightarrow M$
is such that $\partial(\Delta_{2})=\Sigma_{2}$ and
$\partial(\delta_{2})=\gamma_{2}.$   Using Poincar\'e dual forms
the action $S$ may be written as:

$$-S(A_{1},A_{2},\gamma _{1},\gamma _{2}) = \int_{M} A_{1} \wedge
dA_{2} + \int_{M} P(\Sigma _{1}) \wedge A_{1} + \int_{M} P(\Sigma
_{2})
\wedge A_{2}.$$
Varying $S$ with respect to $A_{1}$ and $A_{2}$ we obtain the
equations of motion $$dA_{1} = (-1)^{p}P(\Sigma _{2}) \mbox{ \
\ and \ \ } dA_{2} = (-1)^{p(n-p)+1}P(\Sigma _{1}).$$  Thus the on-shell
action $-S_{os}(A_{1},A_{2},\gamma _{1},\gamma _{2})$ is given by
$$
\int_{M} P(\Sigma _{2})
\wedge A_{2}=\int_{\Sigma _{2}} \gamma_{2}^{*}
A_{2}=\int_{\partial(\Delta _{2})} \gamma_{2}^{*}
A_{2}=\int_{\Delta _{2}} \gamma_{2}^{*} (dA_{2})=\int_{M} P(\Sigma
_{1}) \wedge P(\Delta _{2}).$$  We have shown that the on-shell
action is given by
$$-S_{os}(\gamma _{1},\gamma _{2})=\int_{M} P(\Sigma _{1})
\wedge P(\Delta _{2}).$$
From this expression it is clear that the on-shell action $S_{os}$
is an $A(M)$-invariant function on $BE(M,p)\times BE(M,n-p-1)$,
indeed $S_{os}(\gamma _{1},\gamma _{2})$ computes the linking
number of the embedded submanifolds $\gamma _{1},\gamma _{2}.$
Let us consider the case where $M=\mathbb{R}^{n}$ and use
coordinates $(x_1,x_2,...,x_n)$ to write the solution of the
equations of motion and the on-shell action. The Poincar\'e dual
form $P(\gamma)=P(\gamma)_{\mu}dx_{\mu}$ of an embedded
$p$-manifold $\gamma:\Sigma \longrightarrow
\mathbb{R}^{n}$ is given by $$P(\gamma)_{\mu}(x)= \varepsilon_{\mu,\nu} \int_{\Sigma}
\gamma^{*}(dx^{\nu}) \delta
  ^{n}(x-\gamma(a)),$$
where $a \in \Sigma$, $\delta^{n}(x-\gamma(a))$ is the
$n$-dimensional  Dirac's delta function centered at $\gamma(a)$,
and for $\mu=(\mu_1,...,\mu_p)$ we set $dx^{\mu}=dx^{\mu_1} \wedge
... \wedge dx^{\mu_n}$. The solution
$A_{\gamma}=A_{\gamma,\mu}dx^{\mu}$ of the equations of motion is
given by $$A_{\gamma, \mu}(x) = \int _{R^{n}} dy^{n}
 \epsilon _{\mu c \nu}
 \,\frac{(x-y)^{\mu_{c}}}{\mid x-y \mid
 ^{n}}P(\gamma_{2})_{\nu}(y),$$
where $|x|= \sqrt{x_{1}^{2}+x_{2}^{2}+...+x_{n}^{2}}$ and
$\epsilon_{\mu c \nu}=\epsilon _{\mu_{1} \mu_{2} ...\mu_p c
\nu_1... \nu_{n-p-1}}$ is the completely antisymmetric symbol in
$n$-dimensions. Using the expression above for both $A_{1}$ and
$A_{2}$, we obtain explicitly the on-shell action $S_{os}$, i.e.,
the linking number of the embeddings $\gamma _{1}$ and $\gamma
_{2}$:

 \begin{equation*}
S_{os}(\gamma_1,\gamma_2) =
\int_{\gamma_{1}}\int_{\gamma_{2}}\epsilon _{\mu c \nu}
\gamma_{1}^{*}(dx^{\mu}) \gamma_{2}^{*}(dx^{\nu})
\frac{(\gamma_1(a)-\gamma_1(b))^{c}}{\mid \gamma_1(a)-\gamma_1(b) \mid
 ^{n}}.
\end{equation*}

For $n=3$, $p=1$,  $\gamma _{1}$ and $\gamma _{2}$ are actually
closed curves in $\mathbb{R}^{3}$ and the on-shell action
$S_{os}(\gamma_1,\gamma_2)$ is  the Gauss' linking number. We may
also be consider the case $n=1$, $p=0$, indeed let us consider a
slightly generalized action. The space of fields
${C^{\infty}(\mathbb{R})}^{ n}$ consists of $n$-tuples
$(f_{1},...,f_{n})$ of piecewise smooth functions on the real line
$\mathbb{R}$. The manifold of parameters
$C_{n}(\mathbb{R})=\{(x_1,...,x_n) \in {\mathbb{R}}^n \
\ | \ \ x_i \neq x_j \mbox{ for } i \neq j \}$ is the space of
configurations of $n$ distinguishable points on the real line.
Configuration space and its compactification are studied in
\cite{axel, bott}. The action $S:{C^{\infty}(\mathbb{R})}^{\times
n}
\times C_{n}(\mathbb{R})
\longrightarrow \mathbb{R}$ given by
\begin{equation*} \label{aconfig2}
S(f_1,..,f_n;x_1,..,x_n)=\sum_{i<j}\int_{\mathbb{R}}f_{i}(x)f^{\prime}_j(x)dx
- \sum_{i}f_{i}(x_i)
\end{equation*}
is invariant under under the natural action of the group
$A(\mathbb{R})$  of orientation preserving diffeomorphisms of
$\mathbb{R}$. The equations of motion are $-
\sum_{j<i}f_{j}^{\prime} + \sum_{i<j}f_{j}^{\prime} -
\delta_{x_i}=0,$ where $\delta_{x}$ is the delta function concentrated in $x$  and $1
\leq i\leq n$. Integrating we get $-
\sum_{j<i}f_{j} + \sum_{i<j}f_{j}=\theta_{x_i},$ where
$\theta_{x}$ denotes the Heaviside theta function with jump at
$x$. Thus the equation of motion is $Af=\theta$, where $A$ is the
$n
\times n$ matrix given $A_{i,j}=sg(j-i),$  $f=(f_1,...,f_n)$ and $\theta=(\theta_{x_1},...,\theta_{x_n})$.
 One can check that $f=B\theta$ where $B_{i,j}=(-1)^{|i-j|}$ and that the
on-shell action $S_{os}:C_{n}(\mathbb{R}) \longrightarrow
\mathbb{R}$ is given by
$$S_{os}(x_1,...,x_n) = \sum_{i<j,k,s}(-1)^{|i-k| + |j-s| }\theta_{x_s}(x_k).$$

Notice that $S_{os}(x_1,...,x_n)$ is indeed an
$A(\mathbb{R})$-invariant function on configuration space as it
should according to Lemma \ref{exact1}.

\section{Perturbative solutions}\label{PS}

Suppose one is interested in finding solutions of an equation of
the form $O(\varphi)= \psi$, where $V$ is a vector space, $O:V
\longrightarrow V$ is a non-necessarily linear map, $\psi$ is an element of $V$, and
$\varphi$ is the unknown. We are actually going to work
perturbatively, so we may as well start with a map $O:V
\longrightarrow V[[\lambda]]$, thus
$O=\sum_{n=0}^{\infty}O_{n}\lambda^{n}$ where $O_{n}:V
\longrightarrow V$ is a non-necessarily linear map. Assume that each $O_n$ admits a globally converging Taylor
expansion $$O_{n}=
\sum_{k=1}^{\infty}O_{n,k}(\varphi,...,\varphi),$$
where $O_{n,k}:V^{\otimes k} \longrightarrow V$ is a multilinear
operator and $O_{n,1}=0$ for $n\geq1$.  Finding solutions of the
equation $\sum_{n=0}^{\infty}O_{n}\lambda^{n}=
\psi$ is a notoriously difficult problem, and no general answer
should be expected. Remarkably, it can be treated perturbatively
as follows: performing the substitutions $\varphi
\rightarrow \lambda \varphi$ and $\psi
\rightarrow \lambda \psi,$ the equation $\sum_{n=0}^{\infty}O_{n}\lambda^{n}=
\psi$ becomes

\begin{equation*}
\sum_{n\geq0,k\geq1}O_{n,k}(\varphi,...,\varphi)\lambda^{n+k-1}=\psi.
\label{pert1}
\end{equation*}
Making the expansion $\varphi=\sum_{i=0}\varphi_{i}{\lambda}^{i}$
transforms the equation above, a system of non-linear equations,
into a system of infinitely many linear equations. Indeed, taking
into account the powers of $\lambda$ we get a zero order equation:
\begin{equation}\label{pert4}
O_{0,1}(\varphi_{0})=\psi,
\end{equation} and for  $n \geq 1$ we get higher order
equations:

\begin{equation*}
\sum_{m,k,i_{1},...,i_{k}}O_{m,k}(\varphi_{i_{1}},...,\varphi_{i_{k}})=0,  \label{pert5}
\end{equation*}
where the sum runs over  non-negative integers
$m,k,i_{1},...,i_{k}$ such that $n=m+
\sum_{s=1}^{k}i_{s} + k -1.$
The equation above may be written in the suggestive form

\begin{equation}
O_{0,1}(\varphi_{n})= - \sum_{m,k
,i_{1},...,i_{k}}O_{m,k}(\varphi_{i_{1}},...,\varphi_{i_{k}}),\label{pert6}
\end{equation}
where  $m \geq 0$, $k \geq 2$ and $n=m+
\sum_{s=1}^{k}i_{s} + k -1.$ Thus necessarily integers $i_{1},
...,i_{k}$ are strictly less than $n$, and if $O_{0,1}$ is
invertible then $(\ref{pert6})$ uniquely determines $\varphi_{n}$
in terms of  $\varphi_{i}$ with $i < n$.\\

In order to find $\varphi_n$ explicitly we need several
combinatorial notions \cite{Bergeron}. A directed graph is a
triple $(V,E,(s,t))$ where $V$ and $E$ are finite sets -- the set
of vertices and edges -- and $(s,t):E\longrightarrow V
\times V$ is a map.  A path $\gamma$ in a graph is a
sequence of edges $e_1,e_2,...,e_k$ such that $t(e_i)=s(e_{i+1})$
for $1\leq i \leq k-1.$ We say that $\gamma$ is path from $s(e_1)$
to $t(e_k).$  A rooted tree $T$ is a directed graph with a
distinguished vertex $r$, called the root, such that for each
vertex $v$ of $T$ there is a unique path in $T$ from $v$ to $r$.
The valence of a vertex $v$ is $val(v)=|star(v)|,$ where
$star(v)=\{e \in E \ \ |\
\ t(e)=v\}.$ Vertex $v$ is a leave if $val(v)=0.$ A vertex that is
not a leave is called internal. The set of internal vertices is
denoted by $V_{I}$ and the set of leaves is denoted by $V_{L}.$
The root of a tree is an internal vertex, except in the case of
the tree $\bullet$ whose unique vertex is the root. A tree
together with a linear order on $star(v)$, for each internal
vertex $v$ is called planar. A labelled planar rooted tree is a
pair $(T,l)$ where $T$ is a planar rooted tree and
$l:V_{I}(T)\longrightarrow
\mathbb{N}$ is a map, called the labelling of $T$. Consider the category whose
objects are labelled planar rooted trees. A morphisms is a pair
$(f, g)$ where $f: V_{T_1} \longrightarrow V_{T_2}$ and
$g:E_{T_1}\longrightarrow E_{T_2}$ are maps such that
$(s_2,t_2)\circ g= (f,f) \circ (s_1,t_1);$ moreover we demand the
pair $(f,g)$ preserves both the label and the linear ordering on
$star(v)$ for each internal vertex $v$.

\begin{defi}\label{rtree}{\em
For $n \geq 1$ let $\textsc{T}_n$ be the set of isomorphism
classes of labelled planar rooted trees $T$ such that: $val(v)\geq
2$ for $v\in V_{I}$ and $\sum_{v \in
V_{I}}(val(v)+l(v))=n+|V_{I}|.$}
\end{defi}

A labelled planar rooted tree $T$ is uniquely constructed by
joining planar subtrees $T_1,...,T_k$, $k
\geq 2$, to the root $r$ labelled by $l$, see Figure $\ref{corollas1}$.
If $T$ is so constructed we  write $T = (T_1,...,T_k)_{l}.$ The
following set theoretical identities hold: $$
V_L(T)=\sqcup_{s=1}^{k}V_l(T_s) \mbox{ \ and \ }
V_I(T)-\{r\}=\sqcup_{s=1}^{k}V_I(T_s).$$
\begin{figure}[h]
\begin{center}\includegraphics[height=2.3cm]{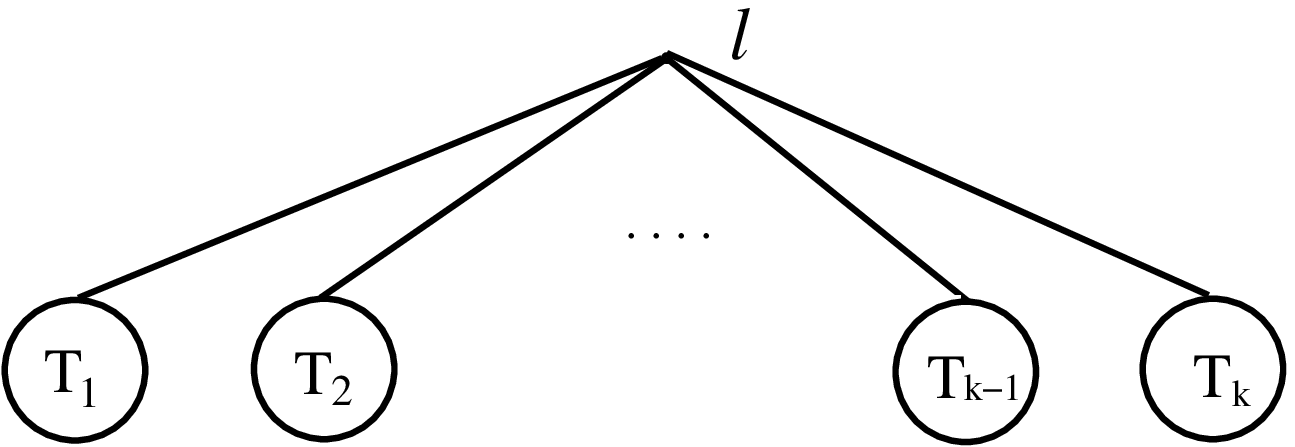}
\caption{\  Tree $(T_1,...,T_k)_l.$ \label{corollas1}}
\end{center}
\end{figure}

\begin{lema}\label{tree1}{\em
Let $T_s$ belong to $\textsc{T}_{i_{s}}$ for $1
\leq s \leq k$. Then $T=(T_1,...,T_k)_l$ belongs to $\textsc{T}_{n}$ for
$n=\sum_{s=1}^{k}i_{s} + k +l -1.$}
\end{lema}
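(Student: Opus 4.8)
The plan is to verify directly that $T=(T_1,\dots,T_k)_l$ satisfies the two conditions defining membership in $\textsc{T}_n$ in Definition \ref{rtree}: that $val(v)\ge 2$ for every $v\in V_I(T)$, and that $\sum_{v\in V_I(T)}\bigl(val(v)+l(v)\bigr)=n+|V_I(T)|$. Throughout I would exploit the two set-theoretic identities recorded just before the statement, $V_L(T)=\sqcup_{s=1}^k V_L(T_s)$ and $V_I(T)-\{r\}=\sqcup_{s=1}^k V_I(T_s)$, together with the observation that grafting does not disturb the local picture away from the new root.

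First I would check the valence condition. Forming $T=(T_1,\dots,T_k)_l$ adjoins a new root $r$ and one new edge from the root of each $T_s$ to $r$; each such edge points out of $T_s$ and into $r$. Hence for any $v\in V_I(T_s)$ one has $star_T(v)=star_{T_s}(v)$, so $val_T(v)=val_{T_s}(v)\ge 2$ because $T_s\in\textsc{T}_{i_s}$; likewise the label of $v$ is unchanged, $l_T(v)=l_{T_s}(v)$. For the new root, $star_T(r)$ consists precisely of the $k$ new edges, so $val_T(r)=k\ge 2$; in particular $r$ is internal, and $l_T(r)=l$ by construction. This establishes $val(v)\ge 2$ for all $v\in V_I(T)$.

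Next I would compute the weight sum by isolating the root and using $V_I(T)-\{r\}=\sqcup_s V_I(T_s)$:
\begin{align*}
\sum_{v\in V_I(T)}\bigl(val_T(v)+l_T(v)\bigr)
&=\bigl(val_T(r)+l_T(r)\bigr)+\sum_{s=1}^k\ \sum_{v\in V_I(T_s)}\bigl(val_{T_s}(v)+l_{T_s}(v)\bigr)\\
&=(k+l)+\sum_{s=1}^k\bigl(i_s+|V_I(T_s)|\bigr),
\end{align*}
where the last equality uses $T_s\in\textsc{T}_{i_s}$. Since $|V_I(T)|=1+\sum_{s=1}^k|V_I(T_s)|$, the required identity $\sum_{v\in V_I(T)}(val(v)+l(v))=n+|V_I(T)|$ is equivalent to $k+l+\sum_{s=1}^k i_s=n+1$, that is, to $n=\sum_{s=1}^k i_s+k+l-1$ — exactly the hypothesis. (One also notes $n\ge 1$ here, so the conclusion $T\in\textsc{T}_n$ is meaningful.)

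I do not expect any genuine obstacle: the argument is entirely a matter of careful bookkeeping. The only points demanding attention are that the star and label of each old internal vertex survive grafting untouched, that the new root is internal with valence exactly $k$, and that the count of internal vertices picks up precisely one extra vertex (the root); once these are pinned down, the value of $n$ in the statement is forced by equating the two sides of the weight identity.
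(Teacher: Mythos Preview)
Your proof is correct and follows essentially the same route as the paper: both isolate the new root, invoke the identities $V_I(T)-\{r\}=\sqcup_s V_I(T_s)$ and $|V_I(T)|=1+\sum_s|V_I(T_s)|$, and reduce the weight sum to $\sum_s i_s+k+l-1+|V_I(T)|$. Your version is in fact a bit more careful than the paper's, which silently omits the check that $val(v)\ge 2$ for all internal vertices of $T$ (including the new root with $k\ge 2$); you supply that verification explicitly.
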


\begin{proof}
The set theoretical identities above  imply that
$$\sum_{v \in V_{I}(T)}(val(v)+l(v))=\sum_{{s},{v \in V_{I}}(T_s)}(val(v)+l(v)) +
val(r_T)+l=$$ $$
\sum_{s=1}^{k}i_{s} + \sum_{s=1}^{k}|V_{I}(T_s)| + k+l
= \sum_{s=1}^{k}i_{s} + k +l - 1 + |V_{I}(T)|.$$

\end{proof}

Next definition assumes that the operator $O_{0,1}$ is invertible.

\begin{defi}\label{tree4}{\em
For $T \in \textsc{T}_n$ let $O_T:V^{\bigotimes |V_{L}(T)|}
\longrightarrow V$ be recursively given by
$$O_{\bullet}=O_{0,1}^{-1} \mbox{ \  and \ } O_{(T_1,...,T_k)_l} =
-O_{0,1}^{-1}(O_{l,k}(O_{T_1},...,O_{T_k})).$$}
\end{defi}

\begin{prop}\label{apert8}{\em
The perturbative solution $\varphi =
\sum_{n=0}^{\infty}\varphi_{n}\lambda^{n}$ of equations $(\ref{pert4})$ and
$(\ref{pert6})$ is given by
\begin{equation}\label{no}
\varphi_n= \sum_{T \in \textsc{T}_{n}}O_{T}(\psi,...,\psi).
\end{equation}}
\end{prop}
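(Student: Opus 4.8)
The plan is to prove formula (\ref{no}) by induction on $n$, exploiting the recursive structure of both the tree set $\textsc{T}_n$ and the equations (\ref{pert4})--(\ref{pert6}). For the base case $n=0$, equation (\ref{pert4}) reads $O_{0,1}(\varphi_0)=\psi$, so $\varphi_0 = O_{0,1}^{-1}(\psi) = O_\bullet(\psi)$; since $\textsc{T}_0$ should consist of the single tree $\bullet$ (the only tree with no internal vertices, for which the defining sum condition is vacuous), this matches the right-hand side of (\ref{no}). For the inductive step, fix $n\geq 1$ and assume (\ref{no}) holds for all indices strictly less than $n$. Applying $O_{0,1}^{-1}$ to both sides of (\ref{pert6}) gives
\begin{equation*}
\varphi_n = -\sum_{m,k,i_1,\dots,i_k} O_{0,1}^{-1}\bigl(O_{m,k}(\varphi_{i_1},\dots,\varphi_{i_k})\bigr),
\end{equation*}
where the sum is over $m\geq 0$, $k\geq 2$, and $n = m + \sum_{s=1}^k i_s + k - 1$, and each $i_s < n$.

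The heart of the argument is then to substitute the inductive hypothesis $\varphi_{i_s} = \sum_{T_s\in\textsc{T}_{i_s}} O_{T_s}(\psi,\dots,\psi)$ into this expression and use multilinearity of $O_{m,k}$ and $O_{0,1}^{-1}$ to expand the sum over tuples $(T_1,\dots,T_k)$ with $T_s\in\textsc{T}_{i_s}$. Each resulting term is
\begin{equation*}
-O_{0,1}^{-1}\bigl(O_{m,k}(O_{T_1},\dots,O_{T_k})\bigr)(\psi,\dots,\psi) = O_{(T_1,\dots,T_k)_m}(\psi,\dots,\psi),
\end{equation*}
by Definition \ref{tree4} (with label $l = m$), where the arguments $\psi$ are distributed among the leaves according to the set-theoretic identity $V_L(T) = \sqcup_s V_L(T_s)$. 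So the task reduces to showing that the assignment $(m,k,i_1,\dots,i_k;T_1,\dots,T_k)\mapsto (T_1,\dots,T_k)_m$ is a bijection onto $\textsc{T}_n$. Surjectivity and injectivity follow from the unique decomposition of a labelled planar rooted tree recalled before Lemma \ref{tree1}: any $T\in\textsc{T}_n$ with at least one internal vertex (hence $k = \mathrm{val}(r_T)\geq 2$ subtrees and root label $m = l(r_T)$) is written uniquely as $(T_1,\dots,T_k)_m$, and Lemma \ref{tree1} guarantees exactly that $T_s\in\textsc{T}_{i_s}$ with $n = \sum i_s + k + m - 1$, which is precisely the index constraint in (\ref{pert6}). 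Conversely Lemma \ref{tree1} shows every tuple satisfying the constraint produces a tree in $\textsc{T}_n$.

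The main obstacle I anticipate is bookkeeping rather than conceptual: one must be careful that the combinatorial index set in (\ref{pert6}) matches the tree-gluing data exactly — in particular matching the summation variable $m$ (the order of the operator) with the root label $l$, and checking that the arithmetic constraint $n = m + \sum i_s + k - 1$ is literally the content of Lemma \ref{tree1} — and that the $\psi$-arguments are threaded through the leaves consistently with the recursive definition of $O_T$, which requires invoking $V_L(T) = \sqcup_s V_L(T_s)$ at each stage. A minor point to handle at the outset is to pin down the content of $\textsc{T}_0$ and $\textsc{T}_1$ from Definition \ref{rtree} so that the base case and the first inductive step are unambiguous (e.g.\ that $\bullet\in\textsc{T}_0$ and that trees with $|V_I|=1$, root valence $k$, and label $l$ lie in $\textsc{T}_{k+l-1}$). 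Once these identifications are in place, the induction closes immediately by the bijection above.
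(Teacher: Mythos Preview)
Your proposal is correct and follows essentially the same approach as the paper: both arguments hinge on the bijection between $\textsc{T}_n$ and tuples $(m,k;T_1,\dots,T_k)$ with $T_s\in\textsc{T}_{i_s}$ and $n=m+\sum i_s+k-1$ (Lemma~\ref{tree1}), combined with the recursive Definition~\ref{tree4} of $O_T$. The only cosmetic difference is direction: the paper defines $\varphi_n$ by the tree sum and checks it satisfies (\ref{pert6}) by applying $O_{0,1}$ and decomposing each tree at the root, whereas you start from (\ref{pert6}), apply $O_{0,1}^{-1}$, and assemble the tree sum via the inductive hypothesis---same bijection, same content.
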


\begin{proof}
Let $\varphi_{n}$ be given by $(\ref{no})$ then
$O_{0,1}(\varphi_{n}) = \sum_{T \in
 \textsc{T}_{n}}O_{0,1}(O_{T}(\psi,...,\psi)).$ By  Definition
$\ref{tree4}$, Lemma $\ref{tree1}$, induction and writing  $T =
(T_1,...,T_k)_{l},$ the previous sum  is equal to

$$\sum_{n\geq 0,k \geq 2,T_{1}
\in \textsc{T}_{i_1},...,T_{k}
\in \textsc{T}_{i_k}}-O_{n,k}(O_{T_1}(\psi,...,\psi),...,O_{T_k}(\psi,...,\psi))  =
\sum_{n\geq 0,k \geq
2,i_{1},...,i_{k}}-O_{n,k}(\varphi_{i_{1}},...,\varphi_{i_{k}}).$$
Thus  $\varphi_{n}$ satisfies the required recursion.
\end{proof}

Consider the polynomial equation $a_{n}x^{n} + ... + a_{2}x^{2} +
a_{1}x = y$ with $a_{1} \neq 0.$ Instead of looking for an exact
expression for $x$ as a function of $y$ we look for a perturbative
solution, i.e.  a solution $x =
\sum_{n=0}^{\infty}x_{n}\lambda^{n}$ of the equation
$a_{n}x^{n}\lambda^{n-1} + ... + a_{2}x^{2}\lambda + a_{1}x = y.$
Let $\textsc{T}_{n}^0$ be the set of isomorphism classes of rooted
planar trees with $0$ as the label of all internal vertices.
Proposition $\ref{apert8}$ implies that
$$x_n =  \sum_{T \in
\textsc{T}_{n}^0} (-1)^{|V_i(T)|} a_{1}^{-|V(T)|} a_{T}y^{|V_{l}(T)|},$$
where
$$a_{T}= \prod_{v \in V_i(T)}a_{val(v)}.$$
\begin{cor} \label{pert11}{\em $x = \sum_{n=0}^{\infty}|\textsc{T}_n^0|\lambda^{n}$ is
the formal solution of  $\sum_{n=2}^{\infty} x^{n}\lambda^{n-1} -
x = -1$.}
\end{cor}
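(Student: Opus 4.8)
The plan is to specialize the general solution formula for polynomial equations, established just above the corollary, to the particular equation at hand. Setting $a_n = 1$ for all $n \geq 2$ and $y = -1$ in the equation $\sum_{n=2}^\infty a_n x^n \lambda^{n-1} + a_1 x = y$ requires $a_1 = -1$, so the equation $\sum_{n=2}^\infty x^n\lambda^{n-1} - x = -1$ is exactly of the treated form with $a_1 = -1 \neq 0$. I would then plug these values into the displayed formula
$$x_n = \sum_{T \in \textsc{T}_n^0}(-1)^{|V_i(T)|} a_1^{-|V(T)|} a_T\, y^{|V_l(T)|}.$$

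The key step is the bookkeeping of signs and weights. With every $a_{\mathrm{val}(v)} = 1$ we get $a_T = 1$ for every tree $T$. With $a_1 = -1$ we have $a_1^{-|V(T)|} = (-1)^{|V(T)|}$, and with $y = -1$ we have $y^{|V_l(T)|} = (-1)^{|V_l(T)|}$. Hence each term becomes $(-1)^{|V_i(T)| + |V(T)| + |V_l(T)|}$. Since $V(T) = V_i(T) \sqcup V_l(T)$, the exponent equals $2|V_i(T)| + 2|V_l(T)|$, which is even, so every term is $+1$. Therefore $x_n = \sum_{T \in \textsc{T}_n^0} 1 = |\textsc{T}_n^0|$, and summing over $n$ gives $x = \sum_{n=0}^\infty |\textsc{T}_n^0|\lambda^n$ as claimed. (One should also note the base case $n = 0$: the tree $\bullet$ gives $x_0 = -y = 1 = |\textsc{T}_0^0|$, consistent with $O_\bullet = O_{0,1}^{-1}$ applied to $\psi$, i.e. $x_0 = a_1^{-1}y = 1$.)

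I do not expect any serious obstacle here; the corollary is essentially an immediate unwinding of Proposition~\ref{apert8} in the polynomial case, and the only thing to be careful about is confirming the parity cancellation of the three sign contributions and checking that the normalization conventions ($\textsc{T}_n^0$ counting isomorphism classes, not labelled objects, which is automatic since the label is forced to be $0$) match. The mildly delicate point, if any, is making sure the substitution $a_1 = -1$ is the correct reading of the target equation — the coefficient of the linear term is $-1$, not $+1$ — but once that is fixed the computation is forced.
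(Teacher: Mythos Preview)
Your proposal is correct and follows exactly the route the paper intends: the corollary is stated immediately after the displayed formula $x_n = \sum_{T \in \textsc{T}_n^0}(-1)^{|V_i(T)|} a_1^{-|V(T)|} a_T\, y^{|V_l(T)|}$ as a direct specialization, and your sign bookkeeping (using $|V(T)| = |V_i(T)| + |V_l(T)|$ to see the exponent is even) is precisely the computation the paper leaves implicit. The only minor extension is passing from a polynomial to a formal power series in $x$, but this is harmless in the perturbative framework since each $x_n$ involves only finitely many trees.
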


Similarly one can check that:

\begin{cor} \label{pert11}{\em $x = \sum_{n=0}^{\infty}|\textsc{T}_n|\lambda^{n}$ is
the formal solution of  $\sum_{n,k\geq 2} x^{n}\lambda^{n+k-1} - x
= -1$.}
\end{cor}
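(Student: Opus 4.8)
The plan is to derive this from Proposition \ref{apert8} together with Lemma \ref{tree1}, exactly paralleling the argument that produced the previous Corollary but now keeping track of all labels rather than forcing them to be zero. First I would set up the equation $\sum_{n,k\geq 2} x^{n}\lambda^{n+k-1} - x = -1$ in the normal form required by Section \ref{PS}: we have $V=\mathbb{R}$, $\psi = -1$ (after the substitution $\psi \to \lambda\psi$ the constant term becomes $-1$), and the operator $O = \sum_{n\geq 0} O_n \lambda^n$ with $O_{0,1} = -\mathrm{id}$ and, for the higher pieces, $O_{m,k}(x,\dots,x) = x^{k}$ for every $m\geq 0$ and every $k\geq 2$ — that is, $O_{m,k}$ is the $k$-fold multiplication map, independent of the label $m$. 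The point is that the coefficient of $\lambda^{m+k-1}$ coming from $O_{m,k}$ is $x^k$, and summing over all $m\geq 0$ and $k\geq 2$ reproduces $\sum_{n,k\geq 2}x^n\lambda^{n+k-1}$ once we relabel $n=k$; so the bookkeeping matches Definition \ref{rtree}'s constraint $n = m + \sum i_s + k - 1$ precisely because the label $m = l(v)$ now genuinely contributes.

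Next I would compute $O_T(\psi,\dots,\psi)$ for $T\in\textsc{T}_n$ using Definition \ref{tree4}. Since every $O_{l,k}$ is just $k$-fold multiplication and $O_{0,1}^{-1} = -\mathrm{id}$, an easy induction on the tree structure $T=(T_1,\dots,T_k)_l$ gives $O_T(\psi,\dots,\psi) = (-1)^{|V_I(T)|}\,\psi^{\,|V_L(T)|}$, with the label $l$ appearing nowhere in the value (it only constrains which $\textsc{T}_n$ the tree lands in, via Lemma \ref{tree1}). With $\psi = -1$ and $\psi^{|V_L(T)|} = (-1)^{|V_L(T)|}$, Proposition \ref{apert8} yields
\begin{equation*}
x_n = \sum_{T\in\textsc{T}_n} (-1)^{|V_I(T)| + |V_L(T)|} = \sum_{T\in\textsc{T}_n} (-1)^{|V(T)|}.
\end{equation*}
So the remaining task is purely combinatorial: to show $\sum_{T\in\textsc{T}_n}(-1)^{|V(T)|} = |\textsc{T}_n|$, equivalently that $|V(T)|$ is even for every $T\in\textsc{T}_n$, or more robustly that the signs can be absorbed. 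I would argue this via the defining relation $\sum_{v\in V_I}(val(v)+l(v)) = n + |V_I|$: summing valences over internal vertices counts all edges, hence $|E| = \sum_{v\in V_I}val(v)$, and in a rooted tree $|V| = |E| + 1$, so $|V(T)| = n + |V_I(T)| - \sum_{v\in V_I}l(v) + 1$; pairing this with parity considerations (or, cleaner, noting that $|V| = |V_I| + |V_L|$ and $\sum val(v) = |E| = |V|-1 = |V_I|+|V_L|-1$ forces $|V_L| = \sum_{v\in V_I}(val(v)-1)+1$) reduces the claim to checking a congruence that the constraint in Definition \ref{rtree} makes automatic.

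The main obstacle I anticipate is precisely this last parity/sign computation: making sure that $(-1)^{|V(T)|} = 1$ uniformly on $\textsc{T}_n$, or else correctly reinterpreting the previous Corollary and this one with a consistent sign convention. In the companion statement with $\textsc{T}_n^0$ the labels are all $0$, so $\sum l(v) = 0$ and the constraint reads $\sum_{v\in V_I}val(v) = n + |V_I|$, which pins down $|V_L|$ and $|V_I|$ enough to fix the parity of $|V(T)|$; for $\textsc{T}_n$ the extra freedom in the labels $l(v)\geq 0$ (recall $l: V_I \to \mathbb{N}$, and if $0\in\mathbb{N}$ here) could in principle spoil a naive parity argument, so I would be careful to show that whatever the labels are, the quantity $|V_I(T)| + |V_L(T)|$ — or the relevant exponent of $-1$ after combining the $(-1)^{|V_I(T)|}$ from the recursion with the $(-1)^{|V_L(T)|}$ from $\psi = -1$ — lands in the residue class that makes the sign trivial, falling back on the Euler relation $|V| = |E|+1$ and the edge count $|E| = \sum_{v\in V_I}val(v)$ to close the gap. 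Once that congruence is verified the Corollary follows immediately: $x = \sum_n |\textsc{T}_n|\lambda^n$ is the formal solution.
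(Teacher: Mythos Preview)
Your overall strategy --- specialize Proposition \ref{apert8} with $V=\mathbb{R}$, $O_{0,1}=-\mathrm{id}$, $\psi=-1$, and $O_{m,k}$ the $k$-fold product for all $m\geq 0$, $k\geq 2$ --- is exactly what the paper intends (it offers no proof beyond ``similarly one can check''). But your sign computation contains a slip that manufactures the very obstacle you worry about.

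With $O_{0,1}=-\mathrm{id}$ one has $O_{0,1}^{-1}=-\mathrm{id}$, so in Definition \ref{tree4} the recursion reads $-O_{0,1}^{-1}(\,\cdot\,)=+(\,\cdot\,)$: each internal vertex contributes \emph{no} sign. The only signs come from $O_{\bullet}=O_{0,1}^{-1}=-\mathrm{id}$ at the leaves. The correct induction therefore gives
\[
O_T(\psi,\dots,\psi)=(-\psi)^{|V_L(T)|},
\]
not $(-1)^{|V_I(T)|}\psi^{|V_L(T)|}$. With $\psi=-1$ this is $1$ for every $T$, whence $x_n=|\textsc{T}_n|$ immediately --- no parity argument is needed. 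Equivalently, in the polynomial formula just before the corollaries you have retained the factor $(-1)^{|V_I(T)|}$ but dropped $a_1^{-|V(T)|}=(-1)^{|V(T)|}$; restoring it makes the total sign $(-1)^{|V_I|+|V|+|V_L|}=(-1)^{2|V|}=1$.

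This matters because the parity claim you propose to prove --- that $|V(T)|$ is even for every $T\in\textsc{T}_n$ --- is simply false. Already for $n=1$ the unique tree (root with two leaves, label $0$) has $|V|=3$; your formula would give $x_1=-1$ while $|\textsc{T}_1|=1$. So the anticipated ``obstacle'' is not a gap to be closed but a signal that the sign bookkeeping upstream went wrong. Fix the sign at the leaves and the corollary drops out in one line.
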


We say that an operator $O:V \longrightarrow V$ has a right
inverse if there exists an operator $P: V \longrightarrow V$ such
that $O(P(\varphi)) =
\varphi$ for $\varphi \in O(V)$. The proof of Proposition
$\ref{apert8}$ yields the following result.

\begin{prop} \label{pert80}{\em Let us assume that  $\psi \in O_{1}(M)$,  $O_{0,1}$ posses a right inverse
$P$, and that  $\sum_{n \geq 0,k
\geq 2,i_{1},...,i_{k}}O_{n,k}(\varphi_{i_{1}},...,\varphi_{i_{k}})
\in O_1(V),$ where $n=m+
\sum_{s=1}^{k}i_{s} + k -1.$
A solution of  (\ref{pert4}) and (\ref{pert6}) is given by
$\varphi_n=
\sum_{T \in \textsc{T}_{n}}O_{T}(\psi,...,\psi)$, where $O_{(T_1,...,T_k)_l} =
-P(O_{l,k}(O_{T_1},...,O_{T_k})),$ $O_{\bullet}=P.$}
\end{prop}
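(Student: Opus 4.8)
The plan is to adapt the proof of Proposition \ref{apert8} essentially verbatim, replacing the two-sided inverse $O_{0,1}^{-1}$ by the right inverse $P$, and checking that every step still goes through under the weaker hypotheses. First I would observe that the recursion (\ref{pert4}), (\ref{pert6}) is unchanged: at order $0$ we must solve $O_{0,1}(\varphi_0)=\psi$, and since $\psi\in O_1(V)$... wait, the hypothesis is $\psi\in O_1(M)$, which I read as $\psi$ lying in the image of $O_{0,1}$ (the notation $O_1(M)$ presumably means the range on which $P$ is a section of $O_{0,1}$); so $\varphi_0=P(\psi)$ satisfies $O_{0,1}(P(\psi))=\psi$ by definition of a right inverse. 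This is why we set $O_{\bullet}=P$ rather than $O_{0,1}^{-1}$.

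Next I would run the induction. Suppose $\varphi_0,\dots,\varphi_{n-1}$ have been constructed so that (\ref{pert6}) holds in all lower orders and, moreover, each lower-order right-hand side
$$-\sum_{m,k,i_1,\dots,i_k}O_{m,k}(\varphi_{i_1},\dots,\varphi_{i_k})$$
lies in $O_1(V)$ — this is exactly the standing hypothesis of the Proposition, so applying $P$ makes sense at every stage. Define $\varphi_n=\sum_{T\in\textsc{T}_n}O_T(\psi,\dots,\psi)$ with the modified $O_T$. Applying $O_{0,1}$ and using that $O_{0,1}\circ P$ is the identity on $O_1(V)$ — again justified by the hypothesis that each summand $O_{l,k}(O_{T_1},\dots,O_{T_k})$ feeds into an element of $O_1(V)$ — the term $O_{0,1}(O_{(T_1,\dots,T_k)_l})= -O_{0,1}(P(O_{l,k}(O_{T_1},\dots,O_{T_k}))) = -O_{l,k}(O_{T_1},\dots,O_{T_k})$. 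Then Lemma \ref{tree1} lets me reorganize the sum over $T\in\textsc{T}_n$ as a sum over $k\geq 2$, a label $l$, and subtrees $T_s\in\textsc{T}_{i_s}$ with $n=\sum i_s+k+l-1$, and the inductive identity $\sum_{T_s\in\textsc{T}_{i_s}}O_{T_s}(\psi,\dots,\psi)=\varphi_{i_s}$ collapses this to $-\sum O_{l,k}(\varphi_{i_1},\dots,\varphi_{i_k})$, which is precisely the right-hand side of (\ref{pert6}). Hence $\varphi_n$ solves the order-$n$ equation, closing the induction.

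The one genuinely new point compared to Proposition \ref{apert8} — and the step I expect to be the main obstacle — is the bookkeeping of \emph{where the intermediate expressions live}. With a two-sided inverse there is nothing to check; with only a right inverse $P$, the identity $O_{0,1}(P(z))=z$ holds only for $z\in O_1(V)$, so I must verify at each application that the argument of $P$ is in that set. The hypothesis that $\sum_{n\geq 0,k\geq 2}O_{n,k}(\varphi_{i_1},\dots,\varphi_{i_k})\in O_1(V)$ is stated for the full order-$n$ sum, but the recursive construction of $O_T$ applies $P$ to each individual summand $O_{l,k}(O_{T_1},\dots,O_{T_k})$; so strictly one either needs this membership termwise, or one reorganizes the definition of $O_T$ so that $P$ is applied only to the aggregated right-hand side at each order. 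I would resolve this by interpreting Definition \ref{tree4}'s modified recursion as shorthand for "apply $P$ to the order-$n$ aggregate," or equivalently by adding the harmless remark that one may assume the termwise membership (which holds in the applications to Hodge algebras that motivate the statement); with that reading the proof of Proposition \ref{apert8} transfers with no further changes, and uniqueness is simply dropped from the conclusion since $P$ is not unique.
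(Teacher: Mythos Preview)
Your approach is correct and matches the paper's, which in fact offers no separate argument at all --- it simply states that ``the proof of Proposition~\ref{apert8} yields the following result.'' Your discussion of the termwise-versus-aggregate membership in $O_1(V)$ is a legitimate subtlety that the paper glosses over entirely; your resolution (aggregate before applying $P$, using linearity) is exactly what makes the stated hypothesis sufficient.
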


We refer to the conditions of Proposition \ref{pert80}  as the
consistency conditions. To illustrate how Proposition
\ref{pert80} works we solve perturbatively two general equations arising in
Hodge algebras \cite{f, ko, jz}. The methods we use to solve these
equations resemble the techniques of homological perturbation
theory \cite{gu}. Let $(A,d,<
\ ,\ >)$ be a Hodge algebra, i.e.  $(A,d)$ is a differential graded algebra, $<\  ,\  >:A
\otimes A \longrightarrow \mathbb{R}$ is a graded symmetric non-degenerated bilinear
form, and $A$ admits a Hodge decomposition. The adjoint $d^*$ of
$d$ is such that $<da,b> = < a,d^{*}b> $ for $a,b \in A$.  Let
$\Delta = dd^* + d^{*}d$ be the Laplace-Beltrami operator. The
subspace $\mathcal{H}
\subseteq A$ of harmonic elements is  $\mathcal{H} = Ker(\Delta)$.
There is an orthogonal decomposition $A = Im(d) \bigoplus Im(d^*)
\bigoplus \mathcal{H},$ and by Hodge theory  $\mathcal{H}^i$ is canonically
isomorphic to $H^{i}(A)=\frac{Ker(d^{i})}{Im(d^{i-1})}.$ Also
there exists an operator $Q:A \longrightarrow A$ such that $I =
\Delta Q +
\pi_{\mathcal{H}},$ where $I:A \longrightarrow A$ is the identity map
and $\pi_{\mathcal{H}}:A \longrightarrow \mathcal{H}$ is the
orthogonal projection onto $\mathcal{H}$. Moreover setting $G=
d^{*} Q$ we get that $I = G d + d G + \pi_{\mathcal{H}}.$ We look
for a perturbative solution of the equation
\begin{equation}
\Delta(a) + \sum_{n\geq 0,k\geq2}O_{n,k}(a,...,a){\lambda}^{n+k-1} = b,
\label{adg4}
\end{equation}
where $O_{n,k}:A^{\otimes k} \longrightarrow A$ is a linear
operator of degree $1-k$, $O_{n,1}=0$ for $n\geq1$ and $b \in
A^{1}$. If $\mathcal{H}^1=0,$ then $\Delta Q=I$ on $A^1$  and the
consistency conditions of Proposition \ref{pert80} hold.
\begin{prop}\label{pert8}{\em
If $\mathcal{H}^1=0,$ then a perturbative solution
$a=\sum_{n=0}^{\infty}a_n\lambda^n$ of $(\ref{adg4})$ is given by
$a_{n}=
\sum_{T \in \textsc{T}_{n}}O_{T}(b,...,b)$ where
$O_{(T_1,...,T_k)_l} = -Q(O_{l,k}(O_{T_1},...,O_{T_k}))  \mbox{
and } O_{\bullet}=Q.$}
\end{prop}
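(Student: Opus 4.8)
The plan is to derive Proposition \ref{pert8} as a direct application of Proposition \ref{pert80}, with the operator $Q$ playing the role of the right inverse $P$ of $O_{0,1} = \Delta$. First I would verify that the consistency conditions are met. Since $\mathcal{H}^1 = 0$, the Hodge-theoretic identity $I = \Delta Q + \pi_{\mathcal{H}}$ restricted to degree $1$ reduces to $I = \Delta Q$ on $A^1$; equivalently $Q$ is a right inverse of $\Delta$ on $A^1$ in the sense that $\Delta(Q(\varphi)) = \varphi$ for all $\varphi \in A^1$. Next I would check the degree bookkeeping: each $O_{n,k}$ has degree $1-k$ and $\Delta$ has degree $0$, so in equation $(\ref{adg4})$ with $b \in A^1$ every term must land in $A^1$. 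An inductive argument on $n$ shows that $a_n \in A^1$ for all $n$ (the zeroth term is $a_0 = Q(b) \in A^1$ since $Q$ preserves degree, and each higher term is obtained by applying $Q$ to an expression of degree $1$), so in particular the sum $\sum_{n\geq 0, k\geq 2, i_1,\dots,i_k} O_{n,k}(a_{i_1},\dots,a_{i_k})$ lies in $A^1 = \mathrm{Im}(\Delta|_{A^1}) \subseteq O_1(A)$, which is exactly the last consistency condition.

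Once the consistency conditions are in place, Proposition \ref{pert80} applies verbatim with $P = Q$, giving that $a_n = \sum_{T \in \textsc{T}_n} O_T(b,\dots,b)$ solves $(\ref{pert4})$ and $(\ref{pert6})$, where $O_\bullet = Q$ and $O_{(T_1,\dots,T_k)_l} = -Q(O_{l,k}(O_{T_1},\dots,O_{T_k}))$. The only remaining point is to confirm that equations $(\ref{pert4})$ and $(\ref{pert6})$ for the substituted/rescaled version of $(\ref{adg4})$ are precisely the order-by-order equations one obtains by expanding $a = \sum a_n \lambda^n$ in $(\ref{adg4})$ after the substitutions $a \to \lambda a$, $b \to \lambda b$; this is the same reduction already carried out in general at the start of Section \ref{PS}, with $O_{0,1} = \Delta$, so it requires only matching notation.

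The step I expect to be the main (minor) obstacle is the degree/induction argument showing $a_n \in A^1$ for all $n$, since this is what makes the right inverse $Q$ behave like a genuine inverse on the relevant subspace and is the reason the hypothesis $\mathcal{H}^1 = 0$ enters. Concretely, one argues by strong induction: assuming $a_{i} \in A^1$ for all $i < n$, each term $O_{l,k}(O_{T_1},\dots,O_{T_k})$ appearing in $a_n$ is a composition of operators applied ultimately to copies of $b \in A^1$, and a degree count using $\deg O_{l,k} = 1-k$ together with $\deg Q = 0$ shows the result has degree $1$; hence $a_n \in A^1$ and $\Delta Q a_n = a_n$ may be used freely. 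Everything else is a formal transcription of Proposition \ref{pert80}, and the tree-recursion identity needed is exactly the one established in the proof of Proposition \ref{apert8} via Lemma \ref{tree1}.
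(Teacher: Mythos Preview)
Your proposal is correct and follows essentially the same approach as the paper: the paper's entire argument is the sentence immediately preceding the proposition, namely that $\mathcal{H}^1=0$ forces $\Delta Q = I$ on $A^1$ so that the consistency conditions of Proposition~\ref{pert80} hold. Your write-up simply makes explicit the degree bookkeeping (each $O_{n,k}$ has degree $1-k$, $Q$ has degree $0$, hence inductively $a_n\in A^1$) that the paper leaves implicit, and then invokes Proposition~\ref{pert80} with $P=Q$ exactly as the paper intends.
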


Next we look for a perturbative solution of an equation of the
form
\begin{equation}
da + \sum_{n\geq 0,k\geq2}O_{n,k}(a,...,a){\lambda}^{n+k-1} = b,
\label{adg1}
\end{equation}
with $a \in A^1$, $b \in A^2,$  $O_{n,k}:A^{\otimes k}
\rightarrow A$ operators of degree $2-k$, $O_{n,1}=0$ for $n\geq1,$
and  $d(b)=0$. Assume that the operators $O_{n,k}$ satisfy the
generalized Leibnitz rule $$d O_{n,k}(a_1,...,a_k)=
\sum_{i=1}^{k}(-1)^{\overline{a_1}+...+\overline{a_{i-1}}}O_{n,k}(a_1,...,d(a_i),...,a_k),$$
for homogeneous elements $a_1,...,a_k \in A$, where $\overline{a}$
denotes de degree of an homogeneous element $a \in A$. Moreover,
assume that the operator $O_{n,k}$ satisfy, for $t \geq 1$, the
quadratic relations  for fixed $n,m,t$:
\begin{equation}\label{ho}
\sum_{k+l=t+1}\sum_{1 \leq i \leq t-l+1}(-1)^{\overline{a_1}+...+\overline{a_{i-1}}}
O_{n,k}(a_1,...,O_{m,l}(a_i,...,a_{i+l-1}),..., a_t)= 0.
\end{equation}

If $\mathcal{H}^{2} \simeq H^{2}(A)=0,$ then $\pi_{\mathcal{H}} =
0$ and $I = G d + d G $ on $A^{2}$ and thus $G$ is a right inverse
of $d$.

\begin{prop}\label{pert8}{\em
If $\mathcal{H}^{2}=0,$ then a perturbative solution of
$(\ref{adg1})$ is given by $a=\sum_{n=0}^{\infty}a_n\lambda^n$
where $a_{n}=
\sum_{T \in \textsc{T}_{n}}O_{T}(b,...,b)$ and
$O_{(T_1,...,T_k)_l} = -G(O_{l,k}(O_{T_1},...,O_{T_k})) \mbox{ and
} O_{\bullet}=G.$}
\end{prop}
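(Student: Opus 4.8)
The plan is to run the same tree-recursion argument used in Proposition~\ref{apert8} and in the preceding Proposition~\ref{pert8} for equation~(\ref{adg4}), but now with the right inverse $G$ of $d$ in place of $Q$, and to check that the consistency conditions of Proposition~\ref{pert80} are met so that the candidate series actually solves~(\ref{adg1}). Concretely, write the perturbed equation after the substitutions $a\mapsto\lambda a$, $b\mapsto\lambda b$ as a hierarchy: the zeroth-order equation $d(a_0)=b$, and for $n\ge 1$ the recursion $d(a_n)=-\sum O_{m,k}(a_{i_1},\dots,a_{i_k})$ with $m\ge 0$, $k\ge 2$, and $n=m+\sum_s i_s+k-1$, exactly as in~(\ref{pert4}) and~(\ref{pert6}) with $O_{0,1}=d$ on $A^1$. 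Since $\mathcal H^2=0$ forces $\pi_{\mathcal H}=0$ on $A^2$, we have $I=Gd+dG$ on $A^2$, so $G$ is a right inverse of $d\colon A^1\to A^2$; thus $d(a_0)=b$ is solved by $a_0=G(b)\in A^1$ (using $b\in A^2$), and each higher equation $d(a_n)=-(\cdots)$ is solvable by applying $G$, provided the right-hand side lies in $\mathrm{Im}(d)=\mathrm{Im}(d|_{A^1})$.

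The crux is therefore verifying that the right-hand side of each recursion step is $d$-exact, i.e.\ the consistency hypothesis of Proposition~\ref{pert80}; here is where the generalized Leibnitz rule and the quadratic relations~(\ref{ho}) enter. I would prove by induction on $n$ that each $a_n$ defined by $a_n=\sum_{T\in\textsc{T}_n}O_T(b,\dots,b)$ with $O_\bullet=G$ and $O_{(T_1,\dots,T_k)_l}=-G(O_{l,k}(O_{T_1},\dots,O_{T_k}))$ satisfies \emph{both} $d(a_n)=-\sum O_{m,k}(a_{i_1},\dots,a_{i_k})$ and a closedness-type statement needed to feed the next step. Using $I=Gd+dG$ on $A^2$, one computes $d\bigl(\sum O_{m,k}(a_{i_1},\dots,a_{i_k})\bigr)$; the generalized Leibnitz rule converts this into a sum of terms $O_{m,k}(\dots,d(a_{i_s}),\dots)$, each $d(a_{i_s})$ is replaced by its recursion value (by the inductive hypothesis, since $i_s<n$), and the resulting double sum of nested operators vanishes term-by-term by the quadratic relations~(\ref{ho}) — together with $d(b)=0$ handling the base terms. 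Hence the right-hand side is annihilated by $d$, and since $\mathcal H^2=0$ means $\mathrm{Ker}(d|_{A^2})=\mathrm{Im}(d|_{A^1})$, it is exact; applying $G$ and using $I=Gd+dG$ recovers $d(a_n)=-\sum O_{m,k}(a_{i_1},\dots,a_{i_k})$, closing the induction.

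The main obstacle is the bookkeeping in this exactness/closedness induction: one must track the grading signs $(-1)^{\overline{a_1}+\dots+\overline{a_{i-1}}}$ through the Leibnitz rule and match them precisely against the signs appearing in the quadratic relations~(\ref{ho}), and one must be careful that the index ranges (the constraint $n=m+\sum_s i_s+k-1$, the substitution of $d(a_{i_s})$ by a sum over $k'\ge 2$ and over $m'$) reorganize exactly into the shape of~(\ref{ho}) with $t$ the total number of leaves feeding the two nested operators. Once that combinatorial identification is made, the vanishing is immediate from~(\ref{ho}). Everything else — the tree recursion producing the explicit $a_n$, and the passage from ``solvable'' to ``solved by $G$'' — is formally identical to Propositions~\ref{apert8}, \ref{pert80} and the Hodge-algebra case~(\ref{adg4}) already treated, so I would invoke those and concentrate the write-up on the Leibnitz/quadratic-relation computation.
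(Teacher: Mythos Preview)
Your proposal is correct and follows essentially the same route as the paper. Both arguments reduce the problem to showing that the right-hand side at each perturbative level is $d$-closed (hence $d$-exact, since $\mathcal{H}^2=0$), apply the generalized Leibnitz rule, and then split the resulting expression into a piece that vanishes by the quadratic relations~(\ref{ho}) and a piece that vanishes by induction (with $d(b)=0$ anchoring the base). The only cosmetic difference is organizational: the paper works tree-by-tree with the auxiliary operators $\widetilde{O}_{(T_1,\dots,T_k)_l}=-O_{l,k}(O_{T_1},\dots,O_{T_k})$ and invokes $dG=I-Gd$ on $A^2$ to produce the two pieces, whereas you phrase the same split as ``substitute $d(a_{i_s})$ by its recursion value''; these are equivalent since $a_{i_s}=G(\widetilde{O}_{T}(b,\dots,b))$ summed over $\textsc{T}_{i_s}$.
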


\begin{proof}
We have to show that  $\sum_{T \in
\textsc{T}_{n}}d\widetilde{O}_{T}(b,...,b)=0,$ where
$\widetilde{O}_{(T_1,...,T_k)_l} = -O_{l,k}(O_{T_1},...,O_{T_k}).$
Since
$$d(-O_{l,k}(O_{T_1},...,O_{T_k}))=
\sum_{i=1}^{k}\pm O_{n,k}(O_{T_1},...,
dG\widetilde{O}_{T_i},...,O_{T_k})=$$ $$\sum_{i=1}^{k}\pm
O_{n,k}(O_{T_1},..., \widetilde{O}_{T_i},...,O_{T_k})\mp
O_{n,k}(O_{T_1},..., Gd\widetilde{O}_{T_i},...,O_{T_k}).$$ Thus
$\sum_{T \in \textsc{T}_{n}}d\widetilde{O}_{T}(b,...,b)$ equals
the sum of two terms; the first one vanishes by (\ref{ho}) and the
second one by induction.
\end{proof}

Notice the similarity between the conditions of Proposition
\ref{pert8} and the axioms defining $A_{\infty}$-algebras
\cite{k, s, s2}, especially when the operators $O_{n,k}$ vanish for $n >1$.
Indeed our conditions involve a countable family of operators
$O_{n,k}$ satisfying a countable number of quadratic equations. It
would be interesting to investigate the operadic and geometric
interpretation of the conditions of Proposition \ref{pert8}.

\section{Perturbative on-shell action}\label{poa}

A major difficulty in the process of obtaining invariant functions
by evaluating the on-shell action of classical field theories is
that one can seldom find explicitly the solutions of the equations
of motion. We show in this section that one can get around this
problem if we evaluate instead the perturbative on-shell action.
An interesting feature of the perturbative approach is that one
gets automatically a hierarchy of invariants indexed by the
natural numbers. The zero level is obtained by linearization of
the equations of motion. Higher order invariants are obtained
applying a sophisticated recursive procedure, where each step
consists in solving the linear equations of motion with
varying non-homogeneous term.\\

We are ready to discuss the main result of this work. We are going
to show that under suitable conditions, made precise below, if we
are given an action $$S:F \times P \longrightarrow
\mathbb{R}[[\lambda]]$$ then there are infinitely many $G$-invariant functions
$S_{(n)}:P\longrightarrow \mathbb{R}$ with $n \geq 0$ that are
constructed by evaluating the on-shell action $S_{os}$
perturbatively. Let us then proceed to state the conditions
necessary for this result. First, we assume that we have a group
$G$ which acts via a diagonal action on $F \times P.$ Second, we
assume that the space $F$ of fields is provided with a
non-degenerated $G$-invariant symmetric bilinear form $< \  , \
>:F\otimes F
\longrightarrow \mathbb{R}.$ Moreover, we assume that linear operators on $F$ can be written in the form
$<\chi, \  >$ for some $\chi \in F$. Expand $S$ in powers of
$\lambda$ as
$$S(\varphi,p)=\sum_{n=0}^{\infty}S_{n}(\varphi,p)\lambda^{n}$$ and
consider the further expansions
$$S_0(\varphi,p) = \sum_{k=1}^{\infty}\frac{Q_{0,k}(\varphi,...,\varphi,p)}{k}
\mbox{ \ \ and \ for \  $n \geq 1$ \ set \ \ }
S_n(\varphi,p) =
\sum_{k=3}^{\infty}\frac{Q_{n,k}(\varphi,...,\varphi,p)}{k},$$
where $Q_{n,k}:F^{\otimes k}\times P \longrightarrow \mathbb{R}$.
Notice that for  $n \geq 1$ the maps $Q_{n,k}$ are defined for $k
\geq 3$, this assumption fits nicely with the
results of the previous section. Our third assumption is that
$Q_{n,k}(g\varphi,...,g
\varphi,g p)=Q_{n,k}(\varphi,..., \varphi, p)$ for
each $g \in G.$ By the previous assumptions we can write
$Q_{n,k}(\varphi,...,\varphi,\psi,p) =
<O_{n,k-1}(\varphi,...,\varphi,p),\psi>$ where the maps
$O_{n,k}:F^{\otimes k}\times P \longrightarrow F$ are such that
$$O_{n,k}(g
\varphi,...,g
\varphi,g p)=g O_{n,k}( \varphi,..., \varphi, p)$$ for $ g
\in G.$ Set also $Q_{0,1}(\psi,p)=-<j(p),\psi>$ where
$j(gp)=gj(p)$ for $g
\in G$.
 The Euler-Lagrange equations are determined by the identity
\begin{equation*}\label{lag1}
\frac{d}{d\epsilon}S(\varphi + \epsilon \psi)|_{\epsilon=o}=
Q_{0,1}(\psi) + Q_{0,2}(\varphi,\psi)+
\sum_{n\geq0 ,k\geq
3}Q_{n,k}(\varphi,...,\varphi,\psi)\lambda^{n}.
\end{equation*}
By the previous assumptions the critical points of $S$ are the
solutions of the equation
\begin{equation*}\label{lag1}
O_{0,1}(\varphi,p) + \sum_{n \geq 0,
k\geq2}O_{n,k}(\varphi,...,\varphi,p)\lambda^{n}= j(p).
\end{equation*}
Making $\varphi \longrightarrow \lambda \varphi,$ $j
\longrightarrow \lambda j$, the critical points of $S$ are
determined by the equation
\begin{equation}\label{elag1}
O_{0,1}(\varphi,p) +
\sum_{n\geq 0,k\geq2}O_{n,k}(\varphi,...,\varphi,p)\lambda^{n+k-1}=
j(p).
\end{equation}

\begin{prop} {\em If $O_1(\ \ , p)$ is invertible for each $p \in P$, then
the perturbative solution $\varphi(p) = \sum_{n=0}^{\infty}
\varphi_n(p) \lambda^n$ of $(\ref{elag1})$ is such that
$\varphi(gp)= g \varphi(p)$ for $g \in G$.}
\end{prop}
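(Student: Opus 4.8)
The plan is to use the explicit formula for the perturbative solution provided by Proposition~\ref{apert8}, combined with the equivariance of the building blocks $O_{n,k}$ and $j$. First I would invoke Proposition~\ref{apert8} applied to equation $(\ref{elag1})$ for each fixed parameter $p$: since $O_{0,1}(\ \ ,p)$ is invertible, the perturbative solution is unique and given termwise by
$$\varphi_n(p) = \sum_{T \in \textsc{T}_{n}} O_{T}(j(p),\dots,j(p),p),$$
where $O_{T}$ is the tree operator of Definition~\ref{tree4}, built recursively from $O_{0,1}(\ \ ,p)^{-1}$ and the multilinear operators $O_{l,k}(\ \ ,p)$. The whole statement then reduces to checking that each $\varphi_n$ is $G$-equivariant, i.e. $\varphi_n(gp) = g\varphi_n(p)$.

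The key step is an equivariance bookkeeping argument on the tree operators. I would first record the two basic equivariance facts already granted in the setup: $j(gp) = g\,j(p)$, and $O_{n,k}(g\varphi,\dots,g\varphi,gp) = g\,O_{n,k}(\varphi,\dots,\varphi,p)$. From the latter with $n=0,k=1$, applying $g^{-1}$ on both sides, one gets that the inverse is also equivariant: $O_{0,1}(\ \ ,gp)^{-1}(g\psi) = g\,O_{0,1}(\ \ ,p)^{-1}(\psi)$. Then I would prove by induction on the structure of $T$ (using the decomposition $T = (T_1,\dots,T_k)_l$ and Lemma~\ref{tree1}) that
$$O_{T}(g\psi_1,\dots,g\psi_{|V_L(T)|}, gp) = g\,O_{T}(\psi_1,\dots,\psi_{|V_L(T)|},p).$$
The base case $O_{\bullet}(\ \ ,gp) = O_{0,1}(\ \ ,gp)^{-1}$ is exactly the equivariance of the inverse; the inductive step just threads $g$ through the recursive formula $O_{(T_1,\dots,T_k)_l} = -O_{0,1}(\ \ ,p)^{-1}(O_{l,k}(O_{T_1},\dots,O_{T_k},p))$, using the inductive hypotheses on the $O_{T_s}$ and the equivariance of $O_{l,k}$ and of the inverse. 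Feeding in $\psi_i = j(p)$ and using $j(gp) = g\,j(p)$ gives $O_{T}(j(gp),\dots,j(gp),gp) = g\,O_{T}(j(p),\dots,j(p),p)$, and summing over $T \in \textsc{T}_n$ yields $\varphi_n(gp) = g\,\varphi_n(p)$ as desired; reassembling the $\lambda$-series gives $\varphi(gp) = g\varphi(p)$.

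I do not expect a serious obstacle here — the proof is essentially a structural induction riding on Proposition~\ref{apert8}. The one point that needs slight care is making sure the substitution $\varphi \to \lambda\varphi$, $j \to \lambda j$ used to pass from the Euler--Lagrange equation to $(\ref{elag1})$ does not interfere with equivariance; but since this rescaling is $G$-independent and $j$ rescales together with $\varphi$, the equivariance of the coefficients $O_{n,k}(\ \ ,p)$ and of $j(p)$ is untouched. A secondary subtlety is that Proposition~\ref{apert8} is stated for an operator $O_{0,1}$ with no parameter dependence, so one should note that it applies verbatim for each fixed $p$, and that the resulting formula is natural in $p$ precisely because all the data entering it ($O_{0,1}(\ \ ,p)^{-1}$, the $O_{l,k}(\ \ ,p)$, and $j(p)$) vary equivariantly with $p$.
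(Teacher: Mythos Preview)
Your proposal is correct and follows essentially the same route as the paper: the paper invokes Proposition~\ref{apert8} for the explicit tree formula and then applies Lemma~\ref{epert3}, whose proof is precisely the structural induction on $T=(T_1,\dots,T_k)_l$ showing $O_T(g\alpha,\dots,g\beta,gp)=g\,O_T(\alpha,\dots,\beta,p)$ that you spell out inline. Your treatment is in fact slightly more explicit than the paper's, since you isolate the base case $O_\bullet=O_{0,1}^{-1}$ and verify the equivariance of the inverse directly.
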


\begin{proof}
We show that $\varphi_{n}(gp)= g \varphi_{n}(p)$ for $g
\in G.$  From Proposition $\ref{apert8}$  and Lemma $\ref{epert3}$ we get
$$\varphi_{n}(gp)= \sum_{T \in\textsc{T}_{n}}O_{T}(j(gp),...,j(gp),gp)= g  \sum_{T \in
\textsc{T}_{n}}O_{T}(j(p),...,j(p),p) = g \varphi_{n}(p).$$
\end{proof}

\begin{lema}\label{epert3}{\em
$O_{T}(g \alpha,...,g \beta, gp)=g O_{T}(\alpha,...,\beta,p)$ for
$g
\in G$ and $\alpha,...,\beta \in F$.}
\end{lema}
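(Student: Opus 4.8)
The plan is to prove Lemma \ref{epert3} by induction on the structure of the labelled planar rooted tree $T$, using the recursive definition of $O_T$ from Definition \ref{tree4} together with the equivariance hypotheses that were imposed on the building blocks. The relevant hypotheses are: the operators $O_{n,k}$ satisfy $O_{n,k}(g\varphi,\dots,g\varphi,gp)=gO_{n,k}(\varphi,\dots,\varphi,p)$; the inverse operator $O_{0,1}(\ \ ,p)^{-1}$ is $G$-equivariant because $O_{0,1}(\ \ ,p)$ is (the equation $O_{0,1}(g\alpha,gp)=gO_{0,1}(\alpha,p)$ for an invertible family forces $O_{0,1}(\ \ ,gp)^{-1}\circ g = g\circ O_{0,1}(\ \ ,p)^{-1}$); and $j(gp)=gj(p)$. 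The multilinearity of $O_{n,k}$ in its field arguments is what allows the equivariance to pass through when the arguments are themselves equivariant expressions.

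First I would establish the base case: for the one-vertex tree $\bullet$ we have $O_{\bullet}=O_{0,1}(\ \ ,p)^{-1}$, and $G$-equivariance of this operator is precisely the statement derived above from invertibility of $O_{0,1}(\ \ ,p)$ and its equivariance. Next, for the inductive step, write $T=(T_1,\dots,T_k)_l$ with each $T_s\in\textsc{T}_{i_s}$, and assume as inductive hypothesis that $O_{T_s}(g\alpha,\dots,g\beta,gp)=gO_{T_s}(\alpha,\dots,\beta,p)$ for each $s$. Then compute, using Definition \ref{tree4},
\begin{align*}
O_{T}(g\alpha,\dots,g\beta,gp)
&= -O_{0,1}(\ \ ,gp)^{-1}\bigl(O_{l,k}(O_{T_1}(g\alpha,\dots),\dots,O_{T_k}(\dots,g\beta),gp)\bigr)\\
&= -O_{0,1}(\ \ ,gp)^{-1}\bigl(O_{l,k}(gO_{T_1}(\alpha,\dots),\dots,gO_{T_k}(\dots,\beta),gp)\bigr)\\
&= -O_{0,1}(\ \ ,gp)^{-1}\bigl(g\,O_{l,k}(O_{T_1}(\alpha,\dots),\dots,O_{T_k}(\dots,\beta),p)\bigr)\\
&= -g\,O_{0,1}(\ \ ,p)^{-1}\bigl(O_{l,k}(O_{T_1}(\alpha,\dots),\dots,O_{T_k}(\dots,\beta),p)\bigr)
= g\,O_{T}(\alpha,\dots,\beta,p),
\end{align*}
where the second line uses the inductive hypothesis applied to each subtree, the third uses the equivariance of $O_{l,k}$, and the fourth uses the equivariance of the inverse operator. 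Since every $T\in\textsc{T}_n$ with $n\geq 1$ decomposes uniquely in this way by the remarks preceding Lemma \ref{tree1}, the induction closes.

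I do not expect a serious obstacle here; the lemma is essentially bookkeeping once one observes that each of the three ingredients ($O_{l,k}$, the inverse of $O_{0,1}$, and the seed $j$) is separately equivariant and that composition, multilinear insertion, and inversion all preserve equivariance. The one point that merits a sentence of care is the equivariance of $O_{0,1}(\ \ ,p)^{-1}$: from $O_{0,1}(g\varphi,gp)=gO_{0,1}(\varphi,p)$ one substitutes $\varphi=O_{0,1}(\ \ ,p)^{-1}\xi$ to get $O_{0,1}(gO_{0,1}(\ \ ,p)^{-1}\xi,gp)=g\xi$, hence $O_{0,1}(\ \ ,gp)^{-1}(g\xi)=gO_{0,1}(\ \ ,p)^{-1}\xi$, which is the form used above. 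This makes explicit that the hypothesis "$O_1(\ \ ,p)$ is invertible for each $p$" in the preceding proposition is exactly what is needed to make the recursion $G$-equivariant, and it is the only place the invertibility is used. A final remark: the same argument, with $O_{0,1}^{-1}$ replaced by the right inverse $P$ (or $Q$, or $G$) and with the corresponding equivariance of that right inverse assumed, proves the analogous statement in the degenerate setting of Proposition \ref{pert80}.
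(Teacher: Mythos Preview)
Your proof is correct and follows essentially the same inductive argument as the paper: decompose $T=(T_1,\dots,T_k)_l$, apply the recursive Definition~\ref{tree4}, invoke the inductive hypothesis on the subtrees, and use the equivariance of $O_{l,k}$ and of $O_{0,1}^{-1}$. Your version is in fact more complete than the paper's, which omits the base case and the explicit verification that invertibility of $O_{0,1}(\ \ ,p)$ forces $O_{0,1}(\ \ ,gp)^{-1}\circ g = g\circ O_{0,1}(\ \ ,p)^{-1}$; both points are worth spelling out as you have done.
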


\begin{proof}
Assume that $T =(T_1,...,T_k)_l$,  then
$$O_{(T_1,...,T_k)_l}(g \alpha,...,g \beta, g p)=
-O_{0,1}^{-1}(O_{l,k}(O_{T_1}(g \alpha,...,g \kappa),
...,O_{T_k}(g
\tau,...,g \beta), g p),g p),$$ which by induction is equal to $$-g
O_{0,1}^{-1}(O_{l,k}(O_{T_1}( \alpha,..., \kappa),
...,O_{T_k}(\tau,...,\beta),p),p) = g O_{(T_1,...,T_k)_l}(
\alpha,..., \beta, p).$$
\end{proof}

Similarly one can prove the following result.

\begin{prop} \label{pert10}{\em Assume that  $j(p) \in O_{1}(V,p)$, $O_{0,1}(\ \ ,p)$ has a right inverse
$P(\ \ ,p)$ and $\sum_{n,k \geq
2,i_{1},...,i_{k}}O_{n,k}(\varphi_{i_{1}},...,\varphi_{i_{k}},p)
\in O_{0,1}(V,p)$ where $n=m+
\sum_{s=1}^{k}i_{s} + k -1.$
The solution $\varphi_n(p)=
\sum_{T \in \textsc{T}_{n}}O_{T}(j,...,j,p)$ of  $(\ref{elag1})$ satisfies
$\varphi_n(gp)=\varphi_n(p)$ for $g \in G$.}
\end{prop}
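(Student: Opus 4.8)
The plan is to prove this exactly as the Proposition preceding Lemma \ref{epert3} (the invertible case) was proved, replacing the genuine inverse $O_{0,1}^{-1}$ by the right inverse $P(\ \ ,p)$ throughout. Concretely, I would first record the right-inverse analogue of Lemma \ref{epert3}: for every $T \in \textsc{T}_n$ the operators $O_T$ built from $O_\bullet = P$ and $O_{(T_1,\ldots,T_k)_l} = -P(O_{l,k}(O_{T_1},\ldots,O_{T_k}))$ as in Proposition \ref{pert80}, with the parameter carried through every slot exactly as in Lemma \ref{epert3}, satisfy $O_T(g\alpha,\ldots,g\beta,gp) = g\,O_T(\alpha,\ldots,\beta,p)$ for $g \in G$. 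Granting this, substitution of $j(gp)=gj(p)$ into the tree formula gives $\varphi_n(gp)=\sum_{T\in\textsc{T}_n}O_T(gj(p),\ldots,gj(p),gp)=g\sum_{T\in\textsc{T}_n}O_T(j(p),\ldots,j(p),p)$, and factoring $g$ out of the finite tree sum settles the field-level identity for the solution of $(\ref{elag1})$.

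So the core is the equivariance of $O_T$, proved by induction on the structure $T=(T_1,\ldots,T_k)_l$ precisely as in Lemma \ref{epert3}. Three ingredients feed the induction. First, the assumed equivariance $O_{l,k}(g\varphi,\ldots,g\varphi,gp)=g\,O_{l,k}(\varphi,\ldots,\varphi,p)$ of the Euler--Lagrange operators stated in Section \ref{poa}. Second, the base case $O_\bullet=P$, for which I must \emph{assume} --- and this is the essential extra hypothesis over the invertible case --- that the chosen right inverse is itself $G$-equivariant, $P(g\psi,gp)=g\,P(\psi,p)$; in the invertible case this was automatic from the equivariance of $O_{0,1}$, but a right inverse is not canonical, so equivariance must be imposed when $P$ is selected. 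Third, I would verify that the consistency condition $\sum O_{n,k}(\varphi_{i_1},\ldots,\varphi_{i_k},p)\in O_{0,1}(V,p)$ is $G$-stable: since $g$ acts bijectively on $F=V$ and $O_{0,1}$ is equivariant, $O_{0,1}(V,gp)=g\,O_{0,1}(V,p)$, so the argument presented to $P$ at parameter $gp$ lands in $O_{0,1}(V,gp)$, where $P$ behaves as a genuine left inverse --- exactly what the recursion requires.

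The main obstacle is the final identity itself. The machinery above delivers the \emph{equivariance} relation $\varphi_n(gp)=g\,\varphi_n(p)$, precisely as in the preceding (invertible-case) Proposition; to reach the printed equality $\varphi_n(gp)=\varphi_n(p)$ one would in addition need $g\,\varphi_n(p)=\varphi_n(p)$ for every $g$, i.e.\ that $\varphi_n(p)$ be $G$-fixed, and this fails without a trivial or fixed-point structure on $F$. The honest conclusion of the stated hypotheses is therefore $\varphi_n(gp)=g\,\varphi_n(p)$, the factor $g$ having apparently been dropped on the right. This reading is moreover the only one consistent with the role the result plays downstream: it is exactly the relation $\alpha(gp)=g\,\alpha(p)$ that Lemma \ref{exact1} demands of $\alpha(p)=\varphi(p)$ in order to conclude that the perturbative on-shell action $S_{os}$ is $G$-invariant.
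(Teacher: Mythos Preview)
Your approach is exactly what the paper intends: its entire proof of this proposition is the sentence ``Similarly one can prove the following result,'' meaning the argument of the preceding proposition and Lemma~\ref{epert3} is to be rerun with the right inverse $P(\ \ ,p)$ in place of $O_{0,1}^{-1}(\ \ ,p)$, just as you describe.

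Two points you raise deserve emphasis. First, you are right that equivariance of the right inverse, $P(g\psi,gp)=g\,P(\psi,p)$, is not automatic and must be part of the hypotheses; the paper is silent on this, but the inductive step of Lemma~\ref{epert3} genuinely requires it. Second, your reading of the conclusion is correct: the argument yields $\varphi_n(gp)=g\,\varphi_n(p)$, not $\varphi_n(gp)=\varphi_n(p)$ as printed. The missing factor $g$ is a typographical slip in the statement; the equivariant form is what is actually used downstream, both implicitly in the passage following the proposition and in the proof of Theorem~\ref{elag7}, where invariance of $S_{(n)}$ comes from pairing the equivariant $\varphi$ with the equivariant $Q_T$ and $j$.
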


According to Propositions  $\ref{pert10}$ if $O_{0,1}$$(\
\ ,p)$ has a right inverse, then
$\varphi(p) =
\sum_{n=0}^{\infty}\varphi_{n}(p) \lambda^n$ given by $(\ref{no})$
is a perturbative solution of $(\ref{elag1})$. Plugging this
solution in $S$ we obtain that the perturbative on-shell action
$S_{os}:P \longrightarrow
\mathbb{R}[[\lambda]]$ which is given by $$S_{os}(p) =
S(\varphi(p),p)=\sum_{n=0}^{\infty}S_{(n)}(p)\lambda^{n}.$$ We
proceed to show that the functions $S_{(n)}:P
\longrightarrow \mathbb{R}$ are $G$-invariant.
For $n\geq0$ let $\textsc{R}_n$ be the set of isomorphisms classes
of labelled planar rooted trees $T$ that can be written as
$T=(T_1,...,T_k)_l,$ where $T_s \in \textsc{T}_{i_s}$,
$\sum_si_{s}+l=n,$ for $1 \leq s \leq k$ and $k\geq1$ if $l=0$,
and $k\geq2$ if $l\geq1.$

\begin{defi}{\em
For $T \in \textsc{R}_n$ let $Q_{T}:F^{\otimes |V_{L}(T)|}
\longrightarrow \mathbb{R}$ be given by
$$Q_{(T_1,...,T_k)_{l}} = Q_{l,k}(O_{T_{1}},...,O_{T_{k}}).$$}
\end{defi}
The proof of the following result is similar to that of
Proposition $\ref{apert8}$.
\begin{prop}\label{listo}{\em
$S_{(n)}(p)=\sum_{T \in \textsc{R}_n}Q_{T}(j(p),...,j(p),p)$ for
$n\geq0.$}
\end{prop}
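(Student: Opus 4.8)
The plan is to expand $S_{os}(p)=S(\varphi(p),p)$ using the power series definitions of $S_n$ in terms of $Q_{n,k}$ and of $\varphi(p)$ in terms of the tree sum $(\ref{no})$, then reorganize the resulting multiple sum by grouping the combinatorial data into a single labelled planar rooted tree. First I would write
\begin{equation*}
S(\varphi(p),p)=\sum_{n\geq 0}S_n(\varphi(p),p)\lambda^n
=\sum_{n\geq 0,k}\frac{Q_{n,k}(\varphi(p),\dots,\varphi(p),p)}{k}\lambda^n,
\end{equation*}
where the inner sum runs over $k\geq 1$ when $n=0$ and $k\geq 3$ when $n\geq 1$ (matching the expansion hypotheses on $S$). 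Substituting $\varphi(p)=\sum_{i\geq 0}\varphi_i(p)\lambda^i$ into each slot and using multilinearity of $Q_{n,k}$ in the first $k$ arguments, each term becomes a sum over $(i_1,\dots,i_k)$ of $Q_{n,k}(\varphi_{i_1}(p),\dots,\varphi_{i_k}(p),p)$ carrying an overall power $\lambda^{n+i_1+\dots+i_k}$; but since we have already made the rescaling $\varphi\to\lambda\varphi$, $j\to\lambda j$ that produced $(\ref{elag1})$, the same rescaling applied here shifts the exponent to $n+\sum_s i_s+k-\text{(something)}$ — I would track the powers carefully so that the coefficient of $\lambda^n$ collects exactly the tuples with $\sum_s i_s+l=n$ appearing in the definition of $\textsc{R}_n$.

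The key step is then to insert $\varphi_{i_s}(p)=\sum_{T_s\in\textsc{T}_{i_s}}O_{T_s}(j(p),\dots,j(p),p)$ from Proposition \ref{apert8} and recognize that a choice of $(l,k)$ together with trees $T_1\in\textsc{T}_{i_1},\dots,T_k\in\textsc{T}_{i_k}$ is precisely the data of a single tree $T=(T_1,\dots,T_k)_l$, which by Lemma \ref{tree1} (or rather the index-bookkeeping defining $\textsc{R}_n$) lies in $\textsc{R}_n$ when $\sum_s i_s+l=n$. Under the identification $Q_{(T_1,\dots,T_k)_l}=Q_{l,k}(O_{T_1},\dots,O_{T_k})$ and using the leaf-decomposition $V_L(T)=\bigsqcup_s V_L(T_s)$, the term $Q_{l,k}(O_{T_1}(j,\dots,j,p),\dots,O_{T_k}(j,\dots,j,p),p)$ is exactly $Q_T(j(p),\dots,j(p),p)$. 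Summing over all such $T$ gives $S_{(n)}(p)=\sum_{T\in\textsc{R}_n}Q_T(j(p),\dots,j(p),p)$, which is the claim. Finally I would note that this presentation immediately yields the $G$-invariance of $S_{(n)}$: by Lemma \ref{epert3} each $O_{T_s}$ is $G$-equivariant, $j(gp)=gj(p)$, and $Q_{l,k}$ is $G$-invariant by the third standing assumption, so $Q_T(j(gp),\dots,j(gp),gp)=Q_T(j(p),\dots,j(p),p)$.

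The main obstacle I expect is the bookkeeping of the powers of $\lambda$ — in particular confirming that after the substitution $\varphi\to\lambda\varphi$, $j\to\lambda j$ the factor $1/k$ in the expansion of $S_n$ is correctly absorbed (this is the familiar phenomenon that differentiating a tree-indexed sum shifts $1/k$ to $1$ and produces the operator $O_{n,k-1}$ rather than $Q_{n,k}$, which is exactly why the statement of Proposition \ref{apert8} features $O$ while the statement here features $Q$). One must check that the edge joining each subtree $T_s$ to the new root contributes the right shift and that no tree is counted with the wrong multiplicity; the cleanest route is to mimic verbatim the inductive argument in the proof of Proposition \ref{apert8}, replacing the outermost application of $-O_{0,1}^{-1}(O_{l,k}(-))$ there by a single application of $Q_{l,k}(-)$ here, since $S_{os}$ is the value of the action rather than a further iterate of the solution recursion.
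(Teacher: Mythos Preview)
Your plan is correct and follows essentially the same approach the paper indicates: the paper does not write out a proof at all but simply remarks that ``the proof of the following result is similar to that of Proposition~\ref{apert8}'', and your proposal is precisely that---expand $S(\varphi(p),p)$ in the $Q_{l,k}$, insert the tree expansion $\varphi_{i_s}=\sum_{T_s\in\textsc{T}_{i_s}}O_{T_s}(j,\dots,j,p)$, and recognise the data $(l,k,T_1,\dots,T_k)$ as a single tree $(T_1,\dots,T_k)_l\in\textsc{R}_n$, with the outermost $-O_{0,1}^{-1}O_{l,k}$ of Proposition~\ref{apert8} replaced by $Q_{l,k}$. Your caveat about the $1/k$ factor and the exact $\lambda$-bookkeeping is a fair one---the paper's brief treatment does not make this explicit either---but it does not affect the strategy, which matches the paper's.
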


We are finally ready to state and prove the main result of this
paper.

\begin{thm} \label{elag7}{\em
$S_{(n)}:P \longrightarrow \mathbb{R}$ is a $G$-invariant function
for $n\geq0$.}
\end{thm}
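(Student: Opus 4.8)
The plan is to prove the theorem by combining the explicit combinatorial formula for $S_{(n)}$ from Proposition~\ref{listo} with the equivariance of the building blocks $Q_{l,k}$ and $O_T$. The strategy is essentially the same one used to establish Proposition~\ref{apert8} and the intermediate Proposition (that $\varphi(gp) = g\varphi(p)$): once we know the summands are individually $G$-invariant, the sum over the index set $\textsc{R}_n$ — which does not depend on $p$ — is automatically invariant.

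First I would invoke Proposition~\ref{listo} to write, for a fixed $p \in P$ and $g \in G$,
$$S_{(n)}(gp) = \sum_{T \in \textsc{R}_n} Q_T(j(gp),\dots,j(gp),gp).$$
Next I would pick an arbitrary $T \in \textsc{R}_n$, write it as $T = (T_1,\dots,T_k)_l$ with $T_s \in \textsc{T}_{i_s}$, and use the definition $Q_{(T_1,\dots,T_k)_l} = Q_{l,k}(O_{T_1},\dots,O_{T_k})$. By Lemma~\ref{epert3} we have $O_{T_s}(j(gp),\dots,j(gp),gp) = g\, O_{T_s}(j(p),\dots,j(p),p)$ for each $s$, using also the hypothesis $j(gp) = g j(p)$. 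Then the third assumption on the action — namely $Q_{n,k}(g\varphi,\dots,g\varphi,gp) = Q_{n,k}(\varphi,\dots,\varphi,p)$, applied with the arguments being the (distinct, but all $g$-translated) outputs $O_{T_s}$ — gives
$$Q_T(j(gp),\dots,j(gp),gp) = Q_{l,k}\bigl(g\,O_{T_1}(\dots),\dots,g\,O_{T_k}(\dots),gp\bigr) = Q_{l,k}\bigl(O_{T_1}(\dots),\dots,O_{T_k}(\dots),p\bigr) = Q_T(j(p),\dots,j(p),p).$$
Summing over $T \in \textsc{R}_n$ yields $S_{(n)}(gp) = S_{(n)}(p)$, which is the claim.

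The main obstacle — and the one point that deserves a careful word rather than a routine wave — is making sure the $G$-equivariance hypothesis on $Q_{n,k}$ really applies in the multilinear (non-diagonal) setting we need: the paper states $Q_{n,k}$ is invariant when all field-slots carry the \emph{same} argument $\varphi$, but in $Q_T$ the slots are filled by the generally \emph{different} elements $O_{T_1},\dots,O_{T_k}$. Since each $Q_{n,k}(\cdot,\dots,\cdot,p)$ is the symmetric multilinear form underlying the degree-$k$ homogeneous piece $Q_{n,k}(\varphi,\dots,\varphi,p)/k$ of $S_n$, polarization identifies it uniquely, and diagonal $G$-invariance of the homogeneous polynomial $\varphi \mapsto Q_{n,k}(\varphi,\dots,\varphi,p)$ forces $G$-invariance of the associated symmetric multilinear map $Q_{n,k}(\varphi_1,\dots,\varphi_k,p)$ in the sense $Q_{n,k}(g\varphi_1,\dots,g\varphi_k,gp) = Q_{n,k}(\varphi_1,\dots,\varphi_k,p)$; so the extension to distinct arguments is legitimate. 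I would state this polarization remark explicitly, then the rest of the argument is the bookkeeping above and a single appeal to Lemma~\ref{epert3}.
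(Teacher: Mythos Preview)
Your proposal is correct and follows exactly the paper's approach: apply Proposition~\ref{listo}, then show each summand $Q_T(j(gp),\dots,j(gp),gp)$ equals $Q_T(j(p),\dots,j(p),p)$ via Lemma~\ref{epert3} and the equivariance of $j$ and $Q_{n,k}$. Your polarization remark is a genuine clarification the paper simply elides (it states diagonal invariance but tacitly uses the full multilinear version, as it already does in the proof of Lemma~\ref{epert3}), so your write-up is in fact more careful than the original.
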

\begin{proof} If $p \in P$ and $g \in G$ then
$$S_{(n)}(gp)=\sum_{T \in \textsc{R}_n}Q_{T}(j(gp),...,j(gp),gp)=
\sum_{T \in \textsc{R}_n}Q_{T}(j(p),...,j(p),p)=S_{(n)}(p).$$
\end{proof}

\section{Chern-Simons-Wong theory and link invariants}
\label{csli}

The relation between Chern-Simons theory and link invariants was
first study in \cite{witten} and is by now a solid theory
\cite{D, kauf, tu, rozansky}, studied from a variety of  points of
view. A common feature of these approaches is that they work at
the quantum level. It was proposed in \cite{l} that it is possible
to construct link invariants from perturbative classical
non-abelian Chern-Simons action with an extra term due to Wong
\cite{balachandran, wong}. Our desire to understand the mathematical foundations
underlying the methodology of
\cite{l} was the primary motivation for
this work. The results of this section illustrate the full power
of Theorem \ref{elag7}, which yields a hierarchy of invariant
functions starting from functional actions depending equivariantly
on external parameters. Let $S^3$ be the unit $3$-sphere and
$\mathfrak{g}$ be the Lie algebra of a compact semi-simple Lie
group $G$.  Fix a symmetric non-degenerated bilinear form $Tr$ on
$\mathfrak{g}$ invariant under the adjoint action. The space of
fields $$(\Omega^{1}(S^3)\otimes \mathfrak{g}) \times
M(S^{1},G)^{n}$$ consists of tuples $(a,g_{1},...,g_{n})$ where $a
\in
\Omega^{1}(S^3)\otimes \mathfrak{g}$ is a $\mathfrak{g}$-valued
$1$-form on $S^3$, and $g_{i}:S^{1} \longrightarrow G$ is a
$G$-valued map on the unit circle. The space of parameters
$E_n(S^{1},S^3)$ consists of $n$-tuples
$(\gamma_{1},...,\gamma_{n})$ such that $\gamma_{i}:S^{1}
\longrightarrow S^3$ is a embedded closed curve in
$S^3$, and the images of the $\gamma_{i}$ are mutually disjoint.
Physically, $a$ represents the gauge potential and $g_{i}$ the
chromo-electric charge of a point-particle undergoing non-abelian
interactions. The trajectories in $S^3$ of these particles are
elements of the parameter space, and we will show that the linking
of these particles is
tested in the process of computing the perturbative on-shell action.\\

Let $A(S^3)$ be the group of automorphisms of $S^3$ connected to
the identity. $A(S^3)$ acts by pullback on $\Omega^{1}(S^3)
\otimes \mathfrak{g}$, trivially on $M(S^{1},G)^{n}$,
and by push-forward on $E_n(S^{1},S^3)$. To construct the action
functional we introduce some notation. The pullback of $a$ to
$S^1$ via $\gamma_{i}:S^{1} \longrightarrow  S^3$  is denoted by
$a_{i}(t),$ where $t$ is the standard coordinate on $S^{1}$. Fix
elements $c_{i}
\in \mathfrak{g}$ and for $g_{i} \in M(S^{1},G)$ for $1 \leq i \leq n$, let the chromo-electric charge
be $c_{i}(t)=g_{i}(t) c_{i}g^{-1}_{i}(t),$ for $t \in S^{1}.$ The
covariant derivative of $g_{i}$ along the $i$-th particle is
$D_{t}g_{i}=
\partial_t g_{i} + \lambda a_{i}(t)g_{i}.$ The action functional is
$$S(a,g_1,...,g_n,\gamma_1,...,\gamma_n)=\int_{\mathbb{R}^{3}}  \, Tr(a\wedge da+
\frac 23 \lambda a^3)+S^{int}(a,g_1,...,g_n,\gamma_1,...,\gamma_n),$$
where
\begin{equation*}
S^{int}(a,g_1,...,g_n,\gamma_1,...,\gamma_n)= \sum_{i=1}^n
\int_{\gamma_{i}}dt \, Tr(k_{i}
g^{-1}_{i}(t)D_{t}g_{i}(t))\label{accion int}
\end{equation*}
corresponds to the interaction of $n$ classical Wong particles
carrying non-abelian charge $\cite{balachandran, wong}$.
Chern-Simons action is invariant under the  group $M(S^3,G)$ of
gauge transformations connected to the identity. The action of $u
\in M(S^3,G)$ on $a \in
\Omega^{1}(S^3) \otimes \mathfrak{g}$ is given by $ a^{u}=u^{-1}au + u^{-1} dt.$ The
action $S^{int}$ is gauge invariant if we set $c_{i}^{u}=c_{i}$
and $g_{i}^{u}=u^{-1}g_{i}$. Non-abelian charges $c_i(t)$
transform in the adjoint representation
$c_{i}(t)^{u}=u(t)^{-1}c_{i} u(t).$ With these conventions
$S$ is an $A(S^3)$-invariant function. \\

The variation of $S$ with respect to $a$ yields the equation
$$F_a=\frac 1 2
\sum_{i=1}^{n}P(\gamma_i, c_i(t)),$$ where  $F_a = da + \frac \lambda 2 [a,a]$ is the curvature of $a$
and $P(\gamma_i, c_i(t))$ is a Poincar\'e dual form defined via
the identity
\begin{equation*}
\int_{S^3}Tr(P(\gamma_i, c_i(t)) \wedge b_{i}) = \int_{\gamma_i}
Tr(c_{i}(t) b_{i}(t)) dt. \label{PoincareNA}
\end{equation*}
The variation of $S$ with respect to $g_{i}$ yields the equation
$D_{t}c_{i}= \dot{c}_{i} + \lambda [a_{i},c_{i}]=0$ of
conservation of non-abelian charges.  Thus $c_{i}({t})= u_{i}({t})
\, c_{i}\,u^{-1}_{i}({t}),$ where $u_{i}(t)=Pexp
\,\,( -\lambda \int_{0}^t a_{i}(s)\, ds
\,\,)$ is the path ordered exponential of the gauge potential $a$
along the curve $\gamma_{i}$. According to our general theory the
on-shell action may be expanded as
$$S_{os}(\gamma_1,...,\gamma_n)=\sum_{m=0}^{\infty}S_{(m)}(\gamma_1,...,\gamma_n)\lambda^n$$
where each $S_{(m)}$  should be an $A(S^3)$-invariant function of
the link $(\gamma_1,...,\gamma_n)
\in E_n(S^{1},S^3).$ Using $g_i(t)=u_i(t)g_i(0)$ one can
check that $S_{os}^{int}(\gamma_1,...,\gamma_n)=0$. Let $\
\ \widehat{}:\mathfrak{g}\longrightarrow End(\mathfrak{g})$ be the adjoint
representation of $\mathfrak{g}$ given by $\widehat{x}(y)=[x,y]$
for $x,y \in \mathfrak{g}.$ From the equation
$D_{t}c_{i}=\dot{c}_{i} + \lambda\widehat{a_{i}}(c_{i}(t))=0$ we
see that $c_i(t)=Pexp(-\lambda\int_{0}^t\widehat{a_{i}}dt)c_i,$
and thus the equation $F_a= \frac 1 2\sum_{i=1}^nP(\gamma_i,
c_i(t))$ becomes
$$da=- \frac \lambda 2   [a,a] + \frac 1 2 \sum_{i=1}^{n}P(\gamma_i)c_i +
\frac 1 2 \sum_{m=1}^{\infty}\sum_{i=1}^{n}P(\gamma_i, c_{m,i}(t))c_i\lambda^m,$$
where $$c_{m,i}(t)= \int_{\Delta_{0,t}^m}\bigwedge_{j=1}^{m}
e_{i,j}^*(\widehat{a}), \ \ \Delta_{0,m}^t=
\{(x_1,x_2,...,x_m)\ \ |  \ \ 0 \leq x_j \leq
t \mbox{ and } x_j \leq x_k \mbox{ if } j \leq k\},$$ and the map
$e_{i,j}:\Delta_{0,t}^m
\longrightarrow S^3$ is given by  $e_{i,j}(x_1,x_2,...,x_m)=
\gamma_i(x_j)$.\\

We look for a perturbative solution $a=\sum_p a_{(p)} \lambda^p$
of the equation of motion. The corresponding recursive system of
linear equations is given by $da_{(0)}= \frac 1 2 \sum_{i=1}^n
P(\gamma_i)c_i,$ and for $p \geq 1$
$$da_{(p)}=-\frac 1 2\sum_{s_1+ s_2=p-1}[a_{(s_1)},a_{(s_2)}] + \frac 1 2 \sum_{m=1}^{p}\sum_{i=1}^{n}P(\gamma_i, c_{m,i}(t))c_i,$$
where $$c_{m,i}(t)=
\sum_{s_1 + \cdots + s_m= p-m}\int_{\Delta_{0,t}^m}\bigwedge_{j=1}^{m}
e_{i,j}^*(\widehat{a}_{(s_j)}).$$ Similarly the perturbative
on-shell action is $S_{os}=\sum_{m=0}^{\infty}S_{(m)}\lambda^n$
where for $m\geq 0$ we have
$$S_{(m)}= \int_{S^3}\sum_{s_1+ s_2=p}Tr(a_{(s_1)}da_{(s_2)})
+ \frac 2 3 \int_{S^3}\sum_{s_1+ s_2+
s_3=p-1}Tr(a_{(s_1)}a_{(s_2)} a_{(s_3)}).$$ Thus $S_{(0)}$ is
given by $S_{(0)}=\int_{S^3}Tr(a_{(0)} da_{(0)}).$ If $\Sigma_i:
D^1 \longrightarrow M$ is such that $\partial(\Sigma_i)=\gamma_i,$
i.e. $\Sigma_i$ a Seifert surface for $\gamma_i,$ then we have
that $a_{(0)}=
\frac 1 2\sum_{i=1}^n P(\Sigma_i)c_i$ and we get
$$S_{(0)}=\frac 1 4
\sum_{i,j=1}Tr(c_ic_j)\int_{S^3}P(\Sigma_i)P(\gamma_j)=\frac 1 4
\sum_{i,j=1}Tr(c_ic_j)lk(\gamma_i,\gamma_j).$$
$S_{(0)}$ is a linear combination of linking numbers, so it is a
link invariant as predicted from our general theory. We proceed to
compute explicitly $S_{(1)}$ which is given by $$S_{(1)}=
\int_{S^3}Tr(2a_{(0)}da_{(1)} + \frac 2 3 a_{(0)}^3). $$ We know that $a_{(0)}=
\frac 1 2\sum_{i=1}^n P(\Sigma_i)c_i$ and
$$da_{(1)}=-\frac 1 2[a_{(0)},a_{(0)}] +
\frac 1 2\sum_{i=1}^{n}P(\gamma_i, c_{1,i}(t))c_i,$$ where  $c_{1,i}(t)=
\frac 1 2\sum_{i=1}^n\int_{\Delta_{0,t}^1}
e_{1,j}^*P(\Sigma_i) \widehat{c}_i.$ Plugging these identities in
the previous expression for $S_{(1)}$ we obtain
$$S_{(1)}=-\frac 1 4\sum_{i,j,k}Tr(c_i[c_j,c_k])\left( \frac 1 3\int_{S^3}P(\Sigma_i)P(\Sigma_j)P(\Sigma_k) +
\int_{\Delta_{0,1}^2} e_{1,j}^*P(\Sigma_k) \wedge  e_{2,j}^*P(\Sigma_i)\ \right).$$
The first summand in the formula above should be clear. The second
summand arises from
$$\frac 1 4\sum_{i,j,k}\int_{S^3}Tr(P(\Sigma_i)c_iP(\gamma_j, \int_{\Delta_{0,t}^1}e_{1,j}^*(P(\Sigma_k)\widehat{c}_k)c_j),$$
or equivalently
$$\frac 1 4\sum_{i,j,k}Tr(c_i[c_k,c_j])\int_{S^3}Tr(P(\Sigma_i))P(\gamma_j,\int_{\Delta_{0,t}^1}e_{1,j}^*(P(\Sigma_k)).$$
By the defining properties of Poincar\'e forms and antisymmetry of
the Lie bracket, the later expression is equal to
$$- \frac 1 4\sum_{i,j,k}Tr(c_i[c_j,c_k])\int_{\Delta_{0,1}^2} e_{1,j}^*P(\Sigma_k) \wedge  e_{2,j}^*P(\Sigma_i)\}.$$

The formula obtained for $S_{(1)}$  is a link invariant with a
crystal clear geometric meaning: the first summand counts triple
intersections of the corresponding Seifert surfaces, the second
summand counts pairs of points $s,t$ in the parametrization of
loop $\gamma_i$, such that $\gamma_i(s) \in
\Sigma_k$ and $\gamma_i(t) \in \Sigma_j.$  In the computation of $S_{(1)}$ we make use of the identity
$$da_{(1)}=-[a_{(0)},a_{(0)}] +
\sum_{i=1}^{n}P(\gamma_i, c_{1,i}(t))c_i,$$ thus we assumed that
the right hand side of this identity is a closed two-form. This
assumption is by no means trivial and does not hold universally.
Indeed it imposes a severe restriction on the type of links for
which the invariant $S_{(1)}$ is well-defined: the linking number
of each pair of loops in the link must vanish. The Borromean rings
is an example of link for which the invariant $S_{(1)}$ is
well-defined and non-vanishing. For a proof of this and other
interesting facts regarding the invariant $S_{(1)}$ the reader may
consult \cite{l}. The reader should notice that $S_{(1)}$ is the
second Milnor's invariant \cite{milnor} for links in $S^3$, and
thus our method provides an interpretation for that invariant
coming from perturbative Lagrangian physics. We expect that the
higher order invariants $S_{(n)}$  correspond with higher order
Milnor's invariants which can be computed using higher order
Massey products \cite{mo}.

\section{Yang-Mills theory and area invariants}\label{ym}

In this section we show that it is possible to obtain invariants
of configurations of immersed curves in the plane from
Yang-Mills-Wong action. To our knowledge results of this type have
seldom been reported -- unlike the relation between links and
Chern-Simons theory -- perhaps because the space of immersed
curves in the plane, considered up to area preserving
diffeomorphisms, has not been deeply studied in the mathematical
literature.  The example consider in this section is studied in
full details in \cite{dfl}, here we only highlight the results of
that paper that are useful to illustrate yet another application
of our
method. \\

The basic settings is quite similar to those for Chern-Simons-Wong
action. The space of fields is $(\Omega^{1}(\mathbb{R}^2)\otimes
\mathfrak{g}) \times M(S^{1},G)^{n}$. The space of parameters
$I_n(S^{1},\mathbb{R}^2)$ consists of $n$-tuples
$(\gamma_{1},...,\gamma_{n})$ such that $\gamma_{i}:S^{1}
\longrightarrow \mathbb{R}^2$ is an immersed closed curve in
$\mathbb{R}^2$, and the images of the $\gamma_{i}$ intersect, if
they do, in transversal double points. The group of symmetries for
the Yang-Mills-Wong action is the group of area preserving
diffeomorphisms of $\mathbb{R}^2.$ As before we fix $c_{i}\in
\mathfrak{g}$ and for $1
\leq i \leq n$ we let $g_{i} \in M(S^{1},G)$. The action functional
is given by
$$S(a,g_1,...,g_n,\gamma_1,...,\gamma_n)= \int_{\mathbb{R}^2}
Tr(F_a\wedge *F_a) +\sum_{i=1}^{n}\int_{\gamma_{i}}d\tau
Tr(k_{i}g_{i}^{-1}(\tau)D_{\tau}g_{i}(\tau)),$$ where $F_a=da +
\frac{\lambda}{2}[a,a]$ and $*$ is the Hodge star
operator. According to our general theory the on-shell action may
be expanded as
$$S_{os}(\gamma_1,...,\gamma_n)=\sum_{m=0}^{\infty}S_{(m)}(\gamma_1,...,\gamma_n)\lambda^n$$
where each $S_{(m)}$ should be a function of
$(\gamma_1,...,\gamma_n)
\in I_n(S^{1},\mathbb{R}^2)$ invariant under area preserving
diffeomorphisms of $\mathbb{R}^2.$ One can compute $S_{(0)}$ and
$S_{(1)}$ and check that they are indeed invariants under area
preserving diffeomorphisms. The invariant $S_{(0)}$ admits the
fairly simple expression
$$S_{(0)}= \sum_{i,j}Tr(c_{i}c_{j})J(\gamma_{i},\gamma_{j}),$$
where the functions $J(\gamma_{i},\gamma_{j})$ have the following
geometric interpretation. A generic immersed curve in
$\mathbb{R}^2$ induces a partition of $\mathbb{R}^2$ into a finite
number of compact blocks and an unbounded block. The function
$J(\gamma_{i},\gamma_{j})$ is the sum of the signed areas of the
intersections of the finite blocks of $\gamma_{i}$ with the finite
blocks of $\gamma_{j}$. In complete analogy with the
Chern-Simons-Wong case, the geometric interpretation of $S_{(1)}$
takes into account no just the areas of the intersection blocks,
but also the order in which the intersection blocks appear for
several, at least three, curves. Again  $S_{(1)}$ is only
well-defined for an appropriated choice of curves. In \cite{dfl}
we describe explicitly three planar curves -- a planar version of
the Borromean rings -- for which $S_{(1)}$ is well-defined and
non-vanishing.

\section{Final remarks}\label{dp}

We introduced a method that yields invariant functions from
classical field theories. In the perturbative regime we actually
obtain a countable hierarchy of invariants. Our construction
leaves many open problems and suggest new lines of research. For
Chern-Simons-Wong action and $2$-dimensional Yang-Mills-Wong
action we are, at this point, only able to compute the first two
invariants of the hierarchy. Though Theorem
\ref{elag7} provides explicit formulae for the higher order invariants,
and our computations suggest that the consistency equations are
satisfied, a rigorous proof is needed. We expect the higher order
link invariants arising  in the computation of the perturbative
Chern-Simons-Wong on-shell action, to be closely related to
Milnor's link invariants \cite{milnor}. \\

We believe our methods can be usefully applied to other classical
field theories. In particular, it may be rewarding to look at
Yang-Mills-Wong action in higher dimensions, it should yield
conformal invariants associated with closed curves in spacetime.
It may also be interesting to apply our methods to the generalized
Chern-Simons action of
\cite{sc}, it should yield invariants related with
Chas-Sullivan product in string topology \cite{catta, cs}. In our
study of perturbative solutions we saw that the invertibility of
the quadratic part of the action plays a fundamental role. Often
the quadratic part is not invertible and new techniques are
required in order to get invariants. One possibility is to
introduce, as in the quantum case, fermionic variables and replace
the action with a new one with invertible quadratic part. Thus, it
is plausible that in the classical perturbative regime, the BRST
and BV procedures may still play a role.  Another possibility
arises when the inverse of the  quadratic part of the action  is
no quite well-defined, but rather a singular operator. In this
case techniques from renormalization
\cite{ck1} may become useful in order to replace
invariants given by ill-defined divergent integrals, by their
renormalized values.   In recent years it has become clear that
many constructions in field theory \cite{BaezDolan, cy, RDEP}, as
well as in other branches of physics and mathematics
\cite{BaezDolan3, Bergeron, Blan, Blan2, kho} admit categorical analogues.
It would be interesting to investigate the
categorical foundations of the method introduced in this paper.\\

\subsection*{Acknowledgment}
Our thanks to Edmundo Castillo, Takashi Kimura and Jim Stasheff.
This work was partially supported by projects G2001000712-FONACIT
and 03.006316.2006-CDCH-UCV. We also thank a couple of anonymous
referees for helpful suggestions and remarks.

\bigskip

\noindent ragadiaz@gmail.com\\
\noindent Grupo de F\'isica-Matem\'atica, Universidad Experimental Polit\'ecnica de las Fuerzas Armadas\\
\noindent Caracas 1010, Venezuela\\

\noindent lleal@fisica.ciens.ucv.ve\\
\noindent Centro de F\'isica Te\'orica y Computacional, Universidad Central de Venezuela\\
\noindent Caracas 1041-A, Venezuela\\

\end{document}